\pgfplotsset{compat=newest}
\newcommand{\R}{\mathbb R}
\newcommand{\bB}{\mathbf B}
\newcommand{\bE}{\mathbf E}
\newcommand{\bF}{\mathbf F}
\newcommand{\bG}{\mathbf G}
\newcommand{\bI}{\mathbf I}
\newcommand{\bP}{\mathbf P}
\newcommand{\ba}{\mathbf a}
\newcommand{\bg}{\mathbf g}
\newcommand{\bn}{\mathbf n}
\newcommand{\be}{\mathbf e}
\newcommand{\bff}{\mathbf f}
\newcommand{\bT}{\mathbf T}
\newcommand{\bu}{\mathbf u}
\newcommand{\bv}{\mathbf v}
\newcommand{\bw}{\mathbf w}
\newcommand{\bx}{\mathbf x}
\newcommand{\by}{\mathbf y}
\newcommand{\bz}{\mathbf z}
\newcommand{\bbf}{\mathbf f}
\newcommand{\cG}{\mathcal G}
\newcommand{\cO}{\mathcal O}
\newcommand{\cR}{\mathcal R}
\newcommand{\cS}{\mathcal S}
\newcommand{\cL}{\mathcal L}
\newcommand{\bGamma}{\boldsymbol{\Gamma}} 
\newcommand{\dist}{\mathop{\rm dist}}
\newcommand{\trace}{\mathop{\rm tr}}
\newcommand{\Div}{\mathop{\rm div}}
\newcommand{\divG}{{\mathop{\,\rm div}}_{\Gamma}}
\newcommand{\G}{\Gamma}
\newcommand{\nablaG}{\nabla_{\Gamma}}
\newcommand{\nablaGT}{\nabla_{\Gamma}^T}
\newcommand{\hatnabla}{\widehat{\nabla}}
\newcommand{\hatnablaT}{\widehat{\nabla}^T}
\newcommand{\hatnablaG}{\widehat{\nabla}_{\Gamma}}
\newcommand{\hatdivG}{\widehat{\Div}_{\Gamma}}
\newcommand{\tildedivG}{\widetilde{\Div}_{\Gamma}}
\renewcommand{\div}{\textrm{div}\ \!}
\newcommand{\tr}{{\rm tr}}
\newcommand{\bxi}{\mbox{\boldmath$\xi$\unboldmath}}
\newtheorem{remark}{Remark}[section]
\begin{document}
\title{On derivations of evolving surface Navier-Stokes equations}
\author{Philip Brandner\thanks{Institut f\"ur Geometrie und Praktische  Mathematik, RWTH-Aachen
University, D-52056 Aachen, Germany (brandner@igpm.rwth-aachen.de).}
\and Arnold Reusken\thanks{Institut f\"ur Geometrie und Praktische  Mathematik, RWTH-Aachen
University, D-52056 Aachen, Germany (reusken@igpm.rwth-aachen.de).}
\and Paul Schwering\thanks{Institut f\"ur Geometrie und Praktische  Mathematik, RWTH-Aachen
University, D-52056 Aachen, Germany (schwering@igpm.rwth-aachen.de).}
}
\maketitle
%
\begin{abstract}
In recent literature several derivations of incompressible Navier-Stokes type equations that model the dynamics of an evolving fluidic surface have been presented. These derivations  differ in the physical principles used in the modeling approach and in the coordinate systems in which the resulting equations are represented. This paper has overview character in the sense that we put five different derivations of surface Navier-Stokes equations into one framework. This then allows a systematic comparison of the resulting surface Navier-Stokes equations and shows that some, but not all, of the resulting models are the same. Furthermore, based on a natural splitting approach in tangential and normal components of the velocity we  show that all five derivations that we consider yield the same tangential surface Navier-Stokes equations.
\end{abstract}

\section{Introduction}
Navier--Stokes type equations posed on manifolds is  a classical topic in analysis, e.g.~\cite{ebin1970groups,Temam88,taylor1992analysis,arnold1999topological,mitrea2001navier}.
 In recent years there has been a strongly growing interest in surface Navier-Stokes equations, in particular concerning physical principles related to these equations and to tailor-made numerical discretization methods \cite{Jankuhn_Reusken_Olshanksii,Gigaetal,Hu_Zhang,Miura_17,NitschkeReutherVoigt2019,nitschke2012finite,reuther2018solving,Fries2018,NitschkeReutherVoigt2019,gross2019meshfree,BonitoDemlowLicht2020,Olsh_Reusken_2021,lederer2019divergence}. One reason for this recent growing interest lies in the fact that these equations are used in the modeling of biological interfaces, cf. the overview paper \cite{Arroyo2019} and the references therein.

In this paper we focus on \emph{derivations} of surface Navier-Stokes equations for \emph{evolving} surfaces. In the past few years several derivations  have been presented in the literature \cite{Jankuhn_Reusken_Olshanksii,Gigaetal,Hu_Zhang,Miura_17,NitschkeReutherVoigt2019}, which differ in the physical principles used in the modeling approach and in the coordinate systems in which the resulting equations are represented.  In \cite{Jankuhn_Reusken_Olshanksii,Hu_Zhang} mass and momentum conservation laws for material \emph{surfaces} are used as basic physical principles, whereas in \cite{Miura_17,NitschkeReutherVoigt2019} similar conservation laws of mass and momentum for a material \emph{volume} are used and combined with a thin film technique. In \cite{Gigaetal} the derivation is based on energy minimization principles.
Besides these differences in physical principles, there is also a difference in the representation of the resulting flow equations. In some papers, e.g. \cite{Hu_Zhang,arroyo2009,NitschkeReutherVoigt2019, NitschkeReutherVoigt2020}, local coordinate systems (curvilinear coordinates) are used, whereas in other literature \cite{Jankuhn_Reusken_Olshanksii,Gigaetal,Miura_17} the standard Euclidean basis of $\mathbb{R}^3$, in which the evolving surface is embedded, is used. Such different coordinate systems lead to  different representations of surface differential operators such as
a covariant derivative or a surface divergence, and one has to be careful
when comparing equations formulated in such different coordinate systems.
Both the local curvilinear and the global Cartesian coordinate system have attractive properties. The local coordinate system can be very useful for modeling of more complex fluid properties, e.g.  in certain classes of
fluid membranes \cite{Hu_Zhang,Arroyo2019} or in flows of  liquid crystals \cite{NitschkeReutherVoigt2019, NitschkeReutherVoigt2020}. The representation in global Cartesian coordinates is very convenient for the
development of numerical simulation methods for these flow equations.

This paper has overview character in the sense that we put the different derivations of surface Navier-Stokes equations presented in~\cite{Jankuhn_Reusken_Olshanksii,Gigaetal,Hu_Zhang,Miura_17,NitschkeReutherVoigt2019} into one framework. Besides the unified survey of derivations we also present the following (new) results:
\begin{itemize}
\item Precise relations of certain relevant differential operators, such as covariant derivatives and surface divergence operators, in different coordinate systems are given. Most of these can be found or are (implicitly) used at different places in the literature. Here we put this into one framework and derive precise relations, e.g. as in Theorem~\ref{theorem_comparison_local_cartesian} and Lemma~\ref{lemma_rate_strain_nabla_representation}.
\item The presentation in a unified framework allows a systematic comparison of the resulting surface Navier-Stokes equations. We will conclude that some of these are identical but also some are different.
\item A splitting approach in tangential and normal components of the velocity is presented, which shows that all five derivations that we consider yield  \emph{the same tangential} surface Navier-Stokes equations.
\end{itemize}
Since the (incompressible) surface Navier-Stokes equations play a fundamental role in the modeling of interfaces or surfaces with fluidic behaviour, we consider a good understanding of several known surface Navier-Stokes systems to be of major importance.

The remainder of this paper is organized as follows. In Section~\ref{secmatsurface} we define evolving material surfaces. In Section~\ref{seccoordanddiffop}  surface differential operators in different coordinate systems are defined and compared.
 Five derivations of surface Navier-Stokes equations, known from the literature, that differ in the underlying physical principles and in the coordinate systems used, are treated in Section~\ref{secderivation}. In Section \ref{secdiscussion} we discuss and compare these equations. In particular a splitting of these  equations in the tangential and the normal components is derived and it is shown that all five derivations result in the same tangential surface Navier-Stokes system.

\section{Evolving material surfaces}\label{secmatsurface}
We outline how evolving material surfaces are defined. A more precise formal description of the notion ``material'' is given in e.g. \cite{MurdochCohen79}. Let $\Gamma=\Gamma(0)$ be a smooth (at least $C^2$) connected
surface embedded in $\mathbb{R}^3$. A  material point $\bz \in \Gamma(0)$
moves in time along a trajectory with coordinates $\bx(\bz,t) \in \R^3$ and a smooth velocity field $\bv(\bx(\bz,t),t) \in \R^3$. For all $\bz \in \Gamma(0)$, the solutions of the initial value problems
\begin{align}
\begin{cases}\label{def_flow_map}
\bx(\bz,0)=\bz,\\
\frac{d}{dt} \bx(\bz,t) = \bv( \bx(\bz,t),t)
\end{cases}
\end{align}
define the evolving surface
\begin{equation} \label{defevolving}
\Gamma(t)=\{\by\in\R^3~|~\by=\bx(\bz,t),~\bz \in \Gamma(0) \}.
\end{equation}
The flow map $\Phi_t: \Gamma(0) \to \Gamma(t)$, $0\leq t \leq T$ is defined by $\Phi_t(\bz)=\bx(\bz,t)$. Let  $\Phi_U: \R^2 \supset U \to \Gamma(0)$ be a local parametrization. We assume that the mapping $\Phi_U: U \to \Phi_U(U)$ is a diffeomorphism. The coordinates in $U$ are
denoted by $\bxi= (\xi_1,\xi_2)$. Composition of $\Phi_U$ and $\Phi_t$ yields the mapping
\begin{equation} \label{mapR} R(\bxi,t):=\Phi_t(\Phi_U(\bxi)),
\end{equation}
 which gives the position of the material point $\by = \Phi_U(\bxi) \in \Gamma(t) \subset \R^3$ at time $t$. Below in Section~\ref{seccoordinate} we use $\bxi \to R(\bxi,t)$ as a (local) parametrization of $\Gamma(t)$. Note that if the flow field $\bv$ is not identically zero, this parametrization is \emph{non}-constant as a function of $t$, even if $\Gamma(t)=\Gamma(0)$ for all $t$.\\[1ex]
The outward pointing normal vector on $\Gamma(t)$ is denoted by $\bn=\bn(\by,t)$, and $\bP=\bP(\by,t)=\bI- \bn\bn^T$ is the projection on the tangential plane at $\by\in\Gamma(t)$. Below we often delete the argument $(\by,t)$ in the notation. For a vector field $\bu$ on $\Gamma(t)$ we shall use throughout this paper the notation $\bu_T=\bP\bu$  for the tangential component and $u_N=\bu\cdot\bn$ for the coordinate in normal direction, so that
\begin{equation}
\label{def_splitting}
\bu=\bu_T+u_N\bn\quad\text{on}~\Gamma(t).
\end{equation}
If in the particle velocity $\bv(\cdot,t)=\bv_T(\cdot,t)+v_N(\cdot,t)\bn(\cdot,t)$ we have $v_N(\cdot,t)=0$ on $\Gamma(t)$, there is no normal
velocity of the surface which means that the geometry of $\Gamma(t)$ is stationary and there is only a tangential particle flow field.\\[1ex]
We assume that on $\Gamma(t)$ there is a continuous strictly positive particle density distribution denoted by $\rho(\by,t)$, $\by \in \Gamma(t)$.\\[1ex]
In Section~\ref{secderivation}, based on certain physical principles we derive Navier-Stokes type equations that determine the particle velocity field $\bv$ and the density distribution $\rho$. As discussed in the introduction we will compare derivations in different coordinate systems. Therefore, in the next section we collect results concerning representations of surface differential operators in different coordinate systems, which will be used in  Section~\ref{secderivation}.

\section{Coordinate systems and surface differential  operators}\label{seccoordanddiffop}
In this section we introduce   surface differential operators in two different coordinate systems.

\subsection{Coordinate systems} \label{seccoordinate}
We treat representations of
vector fields $\bu:\Gamma\to\R^3$ and of operator valued mappings $\bT: \Gamma \to L(\R^3,\R^3)$, where $L(\R^3,\R^3)$ denotes the space of linear mappings $\R^3 \to \R^3$, in two different coordinate systems. The first one is the \emph{Cartesian coordinate} system corresponding to the standard Euclidean basis in $\R^3$, denoted by $\{\hat{\be}_1,\hat{\be}_2,\hat{\be}_3\}$. The second one is a \emph{curvilinear coordinate system}, that we introduce below.
Most of the results presented in this section are standard material that can be found in many textbooks, e.g. \cite{Ciarlet2013,Pruess_Simonett_book}.  We use  tensor notation and the Einstein summation convention in the following way: using Latin indices (i,j,k,$\dots$) we sum over $1,2,3$, using Greek indices ($\alpha,\beta,\gamma,\dots$) we sum up over $1,2$. Partial derivatives  w.r.t. the Cartesian coordinates $\xi_\alpha$ in the standard basis of $\R^2$ are denoted
by $\partial_\alpha=\frac{\partial}{\partial \xi_\alpha}$.

In the remainder of this section we take a fixed $t$. The local parametrization of $\Gamma=\Gamma(t)$ is given by $R(\bxi)=R(\bxi,t)$,  $\bxi
\in U$. Hence, $R(U) \subset \Gamma(t)$.
We assume that this parametrization is an immersion, hence the  matrix $\begin{pmatrix} \partial_1 R(\bxi) &  \partial_2 R(\bxi) \end{pmatrix} \in \R^{3 \times 2}$ has rank two for each $\bxi \in U$. Each point $\by \in R(U)$ can be unambiguously written by $\by=R(\bxi)$
with $\bxi \in U$. The two coordinates $\xi_\alpha$ of $\bxi$ are called \textit{curvilinear} or \textit{local coordinates}
of $\by=R(\bxi)$. We introduce the covariant basis of the tangent space at $\by=R(\bxi) \in \Gamma$ given  by $\bg_\alpha=\bg_\alpha(\bxi):=\partial_\alpha R(\bxi) \in \R^3$. The components of the metric tensor (or first fundamental form) are defined by
\begin{equation} \label{firstform}
g_{\alpha \beta}(\bxi):=\bg_\alpha(\bxi)\cdot \bg_\beta(\bxi).
\end{equation}
The metric tensor is symmetric positive definite. The contravariant basis
of the tangent plane $\bg^\beta$ is defined by $\bg_\alpha \cdot \bg^\beta=\delta_\alpha^\beta$. Here, $\delta_\alpha^\beta$
denotes the Kronecker symbol. The contravariant components of the metric tensor are defined by
$
g^{\alpha \beta}(\bxi):=\bg^\alpha(\bxi) \cdot \bg^\beta(\bxi)
$.
The following relations hold:
\begin{align*}
\bg_{\alpha} = g_{\alpha\beta} \bg^\beta, \qquad \bg^{\alpha}= g^{\alpha \beta} \bg_\beta, \qquad g^{\alpha\gamma}g_{\gamma\beta}=\delta_\beta^\alpha.
\end{align*}
In order to have a basis of $\R^3$ we add to the covariant and contravariant basis a third vector, namely the normal vector (at $\by = R(\bxi)$):
\begin{align*}
\bg_3=\bg^3:=\bn=\frac{\bg_1\times \bg_2}{\|\bg_1\times \bg_2\|}=\frac{\bg^1\times \bg^2}{\|\bg^1\times \bg^2\|}.
\end{align*}
Note that, given the first fundamental form,  this determines the choice of the orientation of the normal vector $\bn$.
The vectors $\bg_i$ and $\bg^i$ for $i=1,2,3$, each form a basis of $\R^3$.
We can (locally) interpret the basis functions $\bg_i$ and $\bg^i$ as functions defined on the surface:  $\bg_i(\by):=\bg_i(R(\bxi))$, $\bxi \in U$.
For presentation purposes it is convenient to identify the (contravariant) Euclidean basis in $\R^3$ with its covariant one, i.e. $ \hat{\be}^i:=\hat{\be}_i$, $i=1,2,3$.

For a vector field $\bu:\Gamma\to\R^3$ we introduce the  representations
\begin{equation*}
\bu=u^i\bg_i= u_i\bg^i=\hat{u}_i\hat{\be}^i.
\end{equation*}
Note that $u_i=\bu\cdot\bg_i$, $u^i=\bu\cdot\bg^i$ and $\hat{u}_i=\bu\cdot\hat{\be}_i$ hold. The $u_i$ ($u^i$) are called covariant (contravariant) components or also local coordinates. The $\hat{u}_i$ are the Cartesian coordinates.

For representation of an operator valued mapping $\bT: \Gamma \to L(\R^3,\R^3)$  we use the tensor calculus format (cf. \cite[Section 8.4]{Ciarlet2013}):
\begin{equation*}
\bT = T_{ij}(\bg^i\otimes\bg^j)=T^{ij}(\bg_i\otimes\bg_j)=\hat T_{ij}(\hat{\be^i}\otimes\hat\be^j),
\end{equation*}
with the outer product given by $(\bu \otimes \bv) \bw = (\bv \cdot
\bw) \bu$ for all $\bu,\bv,\bw\in\R^3$. These representations define corresponding  matrices
that are representations of the same linear operator in different bases. The matrix entries satisfy  $T_{ij}=\bg_i\cdot(\bT \bg_j), ~T^{ij}=\bg^i\cdot(\bT \bg^j),~\hat{T}_{ij}=\hat{\be}_i\cdot(\bT \hat\be_j)$ and are called covariant, contravariant and Cartesian components, respectively. We define the transposed linear operator $\bT^T$ by the relation $\bT\bu \cdot \bw =\bu \cdot \bT^T \bw$ according to the Euclidean scalar product in $\R^3$. Tensors can also be represented using \emph{mixed} components (cf.  \cite[Section 8.4]{Ciarlet2013}). For a \emph{symmetric} linear operator $\bT$, i.e. $\bT = \bT^T$, we introduce the mixed (between covariant and contravariant) matrix representation by
\begin{equation*}
\bT = T^i_j(\bg_i\otimes\bg^j) = T^i_j(\bg^j\otimes\bg_i).
\end{equation*}
The relations $T^i_j=\bg^i\cdot (\bT \bg_j) = \bg_j \cdot (\bT \bg^i)$
hold.

For a symmetric linear operator $\bT$ the sum of its eigenvalues is denoted by $\tr(\bT)$. Since eigenvalues are invariant under basis transformations, we have  $\tr( \bT ) = T_{ii}= \hat{T}_{ii} = T_i^i$.

The projection operator ${\bP=\bI-\bn\bn^T}$ is defined in local coordinates by $\bP(c^i\bg_i):=c^\alpha\bg_\alpha$. In local coordinates  the splitting  as in \eqref{def_splitting} takes the form  $\bu_T:=\bP\bu=u^\alpha\bg_\alpha$ and $u_N=\bu\cdot\bn= u^3 = u_3$.

We recall  the second fundamental form $\bB = \bB(\by)$, $\by \in \Gamma$, also called Weingarten mapping or shape operator, which is a symmetric linear operator for which $\bB = \bP\bB \bP$ holds. From the latter property it follows that $\bB= b_{ij}(\bg^i \otimes \bg^j)=b_{\alpha \beta}(\bg^\alpha \otimes \bg^\beta)$ holds. For the covariant components  we have \cite[Theorem 8.13-1, Theorem 8.14-1]{Ciarlet2013}
\begin{align}\label{eqn_def_weingarten_covariant}
b_{\alpha \beta}&= \bg_3 \cdot \partial_\alpha \bg_\beta = -\partial_\alpha \bg_3 \cdot \bg_\beta =  b_{\beta \alpha}.
\end{align}
For the mixed components of the second fundamental form  the relation $ b_{\alpha}^\beta = g^{\beta\sigma} b_{\sigma\alpha}$ holds. 
Let $\kappa_1,\kappa_2$ and $0$ be the eigenvalues of  $\bB$.
We introduce the (doubled) mean curvature $\kappa = \tr(\bB)=\kappa_1 + \kappa_2$ and the Gaussian curvature $K=\kappa_1 \kappa_2$. The mean curvature can be represented in mixed components by $\kappa = b^\alpha_\alpha$.

\subsection{Surface differential operators} \label{secdiffop}
In this section we recall several surface differential operators. For a given $t$, let    $\phi:\Gamma= \Gamma(t) \to \R$ be a be a scalar function, $\bu:\Gamma \to \R^3$ be a (not necessarily tangential) vector field and $\bT: \Gamma \to L(\R^3,\R^3)$ an operator valued mapping. All are assumed to be at least $C^1$-smooth. We will study partial derivatives and gradients of $\phi$, $\bu$ and $\bT$ and divergence operators for  $\bu$ and $\bT$. Representations in different bases of $\bu=\bu(\by)$ and $\bT = \bT (\by)$, $\by \in \Gamma$ are considered. First in Section~\ref{sectlocal} we recall standard definitions and results for derivatives in local coordinates representation. Note that in this case the basis used in $\R^3$ depends on the (base) point $\by$.
In Section~\ref{sectCartesian} we list (standard) definitions for analogous gradient and divergence operators in case of representation in Cartesian coordinates in $\R^3$. In Section~\ref{sectrelations} we then derive relations between the corresponding operators in the different representations. In the last part of this section we introduce the material derivative in a direction along the moving surface, which can also be formulated both in curvilinear  and Cartesian coordinates.

\subsubsection{Surface differential operators in curvilinear coordinates} \label{sectlocal}
We recall some basic differential geometry concepts, e.g. \cite[Chapter 8]{Ciarlet2013}. Note that we have the following representations in curvilinear coordinates: $\bu=u^i\bg_i=u_i\bg^i$ and $\bT = T^{ij}(\bg_i\otimes\bg_j) = T_{ij}(\bg^i \otimes \bg^j)$. All component functions are differentiable because the basis vectors $\bg_i$ and $\bg^i$ are smooth. Since $R$
is an immersion, there are uniquely defined functions $\bar{\phi}:U \to \R$, $\bar{\bu}:U \to \R^3$ and $\bar{\bT}: U \to L(\R^3,\R^3)$ such that $\bar{\phi}(\bxi)=\phi(R(\bxi))$, $\bar{\bu}(\bxi)=\bu(R(\bxi))$ and $\bar{\bT}(\bxi)=\bT(R(\bxi))$.
\begin{definition}
The \emph{partial} derivatives $\partial_\alpha$ of the scalar function $\phi$, the vector field $\bu$ and the linear operator $\bT$ are defined in terms of the corresponding functions $\bar{\phi}$, $\bar{\bu}$ and $\bar{\bT}$ by
\begin{align*}
\partial_\alpha \phi(\by):=\partial_\alpha\bar{\phi}(\bxi), \quad \partial_\alpha \bu(\by):=\partial_\alpha\bar{\bu}(\bxi), \quad \partial_\alpha \bT(\by):=\partial_\alpha\bar{\bT}(\bxi) \quad \text{with } \by=R(\bxi).
\end{align*}
\end{definition}
We now derive representations of these partial derivatives  in terms of a curvilinear coordinate system, which are used at several places in the remainder of this paper.
For this we use the Christoffel symbols (cf. \cite[Theorem 8.13-1]{Ciarlet2013})
\begin{equation*}
\Gamma_{\alpha\beta}^\sigma:=\bg^\sigma\cdot\partial_\alpha\bg_\beta =
\Gamma_{\beta \alpha}^\sigma.
\end{equation*}
These symbols can also be formulated in terms of the metric tensor (cf. \cite[Theorem 8.13-1,Theorem 8.14-1]{Ciarlet2013}):
\begin{equation*}
\Gamma_{\alpha\beta}^\sigma = \dfrac{1}{2} g^{\sigma\tau}(\partial_\beta g_{\alpha\tau} + \partial_\alpha g_{\beta\tau} -\partial_\tau g_{\alpha\beta}).
\end{equation*}
Representations of partial derivatives of vector fields in terms of a curvilinear coordinate system are given in  the following theorem from \cite[Theorem 8.13-1]{Ciarlet2013}. We extend this theorem with an analogous result \eqref{extra} for partial derivatives of operator-valued functions $\bT: \Gamma \to L(\R^3,T\Gamma)$, where $T\Gamma$ denotes the tangent bundle of $\Gamma$. A proof of the  result  \eqref{extra} is given in the appendix.

\begin{theorem} \label{thm_partial_derivatives}
For a vector field $\bu$, the partial derivatives have the following representations:
\begin{equation} \label{eqn_partial_deriv_u}
\begin{aligned}
	\partial_\alpha \bu = \partial_\alpha (u_i \bg^i) &= (\partial_\alpha u_\beta - \Gamma_{\alpha\beta}^\gamma u_\gamma - b_{\alpha\beta} u_3) \bg^\beta + (\partial_\alpha u_3 + b_\alpha^\beta u_\beta ) \bg^3 \\
	&= (u_{\beta|\alpha} -b_{\alpha\beta}u_3)\bg^\beta + (u_{3|\alpha} + b^\beta_\alpha u_\beta) \bg^3 \\
	= \partial_\alpha (u^i\bg_i) &= (\partial_\alpha u^\beta +  \Gamma_{\alpha\gamma}^\beta u^\gamma - b_\alpha^\beta u^3) \bg_\beta + (\partial_\alpha u^3 + b_{\alpha\beta} u^\beta ) \bg_3 \\
	&=(u^\beta_{~|\alpha} - b^\beta_{\alpha} u^3) \bg_\beta + (u^3_{~|\alpha} + b_{\beta\alpha} u^\beta) \bg_3,
\end{aligned}
\end{equation}
where we use the abbreviations
	\begin{align*}
	u_{\beta|\alpha} := \partial_\alpha u_\beta - \Gamma_{\alpha\beta}^\gamma u_\gamma, \quad u^\beta_{~|\alpha} := \partial_\alpha u^\beta + \Gamma_{\gamma\alpha}^\beta u^\gamma, \quad u_{3|\alpha}=u^3_{~|\alpha}:=\partial_\alpha u_3.
\end{align*}

Let $\bT=T^{\alpha\beta} (\bg_\alpha \otimes \bg_\beta) = T_{\alpha\beta} (\bg^\alpha \otimes \bg^\beta)$ be a function with values in $L(\R^3,T\Gamma)$. For the partial derivatives we have the representations:
	\begin{equation} \label{extra} \begin{split}
	\partial_\gamma\bT &= T^{\alpha\beta}_{~|\gamma} (\bg_\alpha \otimes \bg_\beta) + T^{\alpha\beta} b_{\gamma\alpha} (\bg_3 \otimes \bg_\beta) + T^{\alpha\beta} b_{\gamma\beta} (\bg_\alpha \otimes \bg_3)\\
	&= T_{\alpha\beta|\gamma} (\bg^\alpha \otimes \bg^\beta) + T_{\alpha\beta} b_\gamma^\alpha (\bg^3 \otimes \bg^\beta) + T_{\alpha\beta} b_\gamma^\beta (\bg^\alpha \otimes \bg^3),
	\end{split} \end{equation}
		where we use the abbreviations
\begin{align*}
	T^{\alpha\beta}_{~|\gamma} := \partial_\gamma T^{\alpha\beta} + \Gamma_{\mu\gamma}^\alpha T^{\mu\beta} + \Gamma_{\mu\gamma}^\beta T^{\alpha\mu}, \quad T_{\alpha\beta|\gamma} := \partial_\gamma T_{\alpha\beta} -  \Gamma_{\alpha\gamma}^\mu T_{\mu\beta} - \Gamma_{\beta\gamma}^\mu T_{\alpha\mu}.
	\end{align*}
\end{theorem}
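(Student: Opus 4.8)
The plan is to treat the two displayed identities separately. The vector-field formula \eqref{eqn_partial_deriv_u} is exactly \cite[Theorem 8.13-1]{Ciarlet2013}, so I would simply invoke it; the genuinely new content is \eqref{extra}, and my proof of it rests on a single mechanism: differentiate the tensor-product representation of $\bT$ by the Leibniz rule, and then substitute the known derivatives of the covariant and contravariant basis vectors. Since $\bT$ takes values in $L(\R^3,T\Gamma)$ it is represented with purely tangential indices, $\bT = T^{\alpha\beta}(\bg_\alpha\otimes\bg_\beta) = T_{\alpha\beta}(\bg^\alpha\otimes\bg^\beta)$, and the scalar derivatives $\partial_\gamma T^{\alpha\beta}$, $\partial_\gamma T_{\alpha\beta}$ are well defined through the associated functions on $U$.

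First I would record the derivatives of the basis vectors. The Gauss formula $\partial_\gamma\bg_\alpha = \Gamma_{\gamma\alpha}^\mu \bg_\mu + b_{\gamma\alpha}\bg_3$ follows directly from the definitions $\Gamma_{\gamma\alpha}^\mu = \bg^\mu\cdot\partial_\gamma\bg_\alpha$ and $b_{\gamma\alpha}=\bg_3\cdot\partial_\gamma\bg_\alpha$ together with the expansion of a vector in the basis $\{\bg_1,\bg_2,\bg_3\}$. For the contravariant basis I would differentiate the duality relations $\bg^\alpha\cdot\bg_\mu=\delta^\alpha_\mu$ and $\bg^\alpha\cdot\bg_3=0$; combined with the Gauss formula and the Weingarten relation $\partial_\gamma\bg_3 = -b_{\gamma\beta}\bg^\beta$ this yields $\partial_\gamma\bg^\alpha = -\Gamma_{\gamma\mu}^\alpha\bg^\mu + b_\gamma^\alpha\bg^3$. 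These are exactly the basis-vector identities that already underlie \eqref{eqn_partial_deriv_u}.

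With these in hand the contravariant case is a direct computation: applying the Leibniz rule to $\partial_\gamma(T^{\alpha\beta}\,\bg_\alpha\otimes\bg_\beta)$ produces $(\partial_\gamma T^{\alpha\beta})(\bg_\alpha\otimes\bg_\beta)$ plus two terms in which $\partial_\gamma$ hits a basis vector. Substituting the Gauss formula splits each of the latter into a tangential part (coefficient a Christoffel symbol) and a normal part (coefficient $b_{\gamma\alpha}$ or $b_{\gamma\beta}$), the normal parts giving precisely the $(\bg_3\otimes\bg_\beta)$ and $(\bg_\alpha\otimes\bg_3)$ contributions of \eqref{extra}. After relabelling the dummy contraction index ($\mu\leftrightarrow\alpha$ and $\mu\leftrightarrow\beta$) in the two tangential contributions, they combine with $\partial_\gamma T^{\alpha\beta}$ into the coefficient $T^{\alpha\beta}_{~|\gamma} = \partial_\gamma T^{\alpha\beta} + \Gamma_{\mu\gamma}^\alpha T^{\mu\beta} + \Gamma_{\mu\gamma}^\beta T^{\alpha\mu}$, using the symmetry $\Gamma_{\gamma\mu}^\alpha=\Gamma_{\mu\gamma}^\alpha$. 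The covariant identity follows identically from $\partial_\gamma\bg^\alpha$, the sign in $-\Gamma_{\gamma\mu}^\alpha\bg^\mu$ producing the minus signs in $T_{\alpha\beta|\gamma}$ and the $b_\gamma^\alpha$-term producing the normal contributions.

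The only real care required is index bookkeeping — keeping the two free indices fixed while correctly relabelling the dummy index so that the two Christoffel-type terms land on the common factor $(\bg_\alpha\otimes\bg_\beta)$ — and getting the signs in $\partial_\gamma\bg^\alpha$ right. There is no analytic obstacle: everything reduces to the Leibniz rule and the Gauss--Weingarten equations, and the appearance of the normal terms is forced by the fact that, although $\bT$ is tangential, its derivative $\partial_\gamma\bT$ need not be.
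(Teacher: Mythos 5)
Your proposal is correct and follows essentially the same route as the paper's appendix proof of \eqref{extra}: product rule on the tensor representation, substitution of the Gauss--Weingarten expansions $\partial_\gamma\bg_\alpha=\Gamma_{\gamma\alpha}^\mu\bg_\mu+b_{\gamma\alpha}\bg_3$ and $\partial_\gamma\bg^\alpha=-\Gamma_{\gamma\mu}^\alpha\bg^\mu+b_\gamma^\alpha\bg^3$, and relabelling of dummy indices via the symmetry of the Christoffel symbols. The only cosmetic difference is that you derive the basis-vector derivatives directly from the definitions and the duality relations, whereas the paper reads them off as a special case of \eqref{eqn_partial_deriv_u}; both are equivalent, and like the paper you may write out only one of the two identities in \eqref{extra}, the other being identical in structure.
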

The relation $\bP\partial_\alpha \bu = u_{\beta|\alpha}\bg^\beta$ for tangential vector fields $\bu$ motivates the notation $u_{\beta|\alpha}$.
\\
We recall standard definitions of surface differential operators in curvilinear coordinates  \cite{Pruess_Simonett_book,Pruess_Simonett_KVF}.

\begin{definition} \label{Def1}
For a scalar function $\phi\in C^1(\Gamma,\R)$  the surface gradient is defined by
\begin{align*}
\nablaG \phi :=  \partial_\alpha \phi \, \bg^\alpha.
\end{align*}
For a vector field $\bu\in C^1(\Gamma,\R^3)$ we define the  $\alpha$-th partial covariant derivative $\nabla_\alpha\bu$ and the covariant derivative $\nablaG \bu$ by
\[
 \nabla_\alpha\bu := \bP\partial_\alpha\bu, \quad \nablaG\bu:=\nabla_\alpha\bu\otimes\bg^\alpha.
\]
 The surface divergence of $\bu\in C^1(\Gamma,\R^3)$ and  $\bT \in C^1(\Gamma,L(\R^3, \R^3))$ are  defined by
\begin{equation} \label{defdiv}
\divG\bu := \partial_\alpha\bu \cdot \bg^\alpha, \quad \divG\bT:=(\partial_\alpha\bT)^T\bg^\alpha.
\end{equation}
\end{definition}

Note that there is a transpose in the definition of $\divG\bT$. The definitions of the surface gradient, covariant derivative and surface divergence operators above do not depend on the choice of the parametrization, cf. \cite{Pruess_Simonett_book}.

\begin{remark} \label{remark_surface_gradient} \rm
 Another surface differential operator for a vector field $\bu\in C^1(\Gamma,\R^3)$ that plays a natural role in this setting is the \emph{surface gradient} of $\bu$, defined by  $\nabla_S \bu :=
\bg^\alpha \otimes \partial_\alpha \bu$, cf. \cite{Pruess_Simonett_book}. Note that it maps into the tangent bundle. It is related to the covariant derivative via $\nablaG \bu = \bP \nabla_S^T \bu$ (we use the notation $\nabla_S^T \bu= (\nabla_S\bu)^T$). We use this surface gradient only in the proof of Theorem \ref{theorem_comparison_local_cartesian}.
\end{remark}

\begin{remark} \rm
In \cite{Pruess_Simonett_KVF,Pruess_Simonett_book}, the covariant derivative of  \emph{tangential} vector functions $\bu$ is defined by $\nablaG\bu:=\nabla_\alpha\bu \otimes \bg^\alpha$. In Definition~\ref{Def1} we extended this to general (not necessary tangential) vector fields.
\end{remark}

Using  results from Theorem~\ref{thm_partial_derivatives} one obtains representations of the $\alpha$-th partial covariant derivative in the covariant basis $\bg_\alpha$ in terms of (derivatives of) the contravariant components $u^i$ in $\bu= u^i \bg_i$:

\begin{equation}
\label{eq_partial_covariant_derivative_component}
\nabla_\alpha\bu = \bP\partial_\alpha\bu = (\partial_\alpha u^\beta +  \Gamma_{\alpha\gamma}^\beta u^\gamma - b_\alpha^\beta u^3) \bg_\beta =
(u^\beta_{~|\alpha} - b^\beta_{\alpha} u^3) \bg_\beta.
\end{equation}

This result shows that the notation $u^\beta_{~|\alpha}$ introduced in Theorem~\ref{thm_partial_derivatives} is natural, in the sense that for tangential $\bu$ we have $\nabla_\alpha\bu=u^\beta_{~|\alpha}\bg_\beta$.

If $\bu$ is tangential, the relation
\begin{equation}
\label{eq_covariant_derivative_component}
\nablaG\bu= u_{\alpha|\beta}(\bg^\alpha \otimes \bg^\beta)
\end{equation}
holds, i.e., the covariant components of $\nablaG\bu$ are given by $u_{\alpha|\beta}$. This induces an equivalent alternative  definition of the covariant derivative, that is sometimes used in the literature. An alternative definition of surface divergence of a vector field can be based on the relation
\begin{equation}
\label{eq_divergence_vector_component}
\divG\bu=u^\alpha_{~|\alpha}.
\end{equation}

In the following lemma, we present an analogous representation result for the divergence of an operator-valued function. A proof is given in the appendix.
\begin{lemma}
\label{lemma_divergence_local_representation}
For
$\bT =  T^{\alpha \beta} (\bg_\alpha \otimes \bg_\beta)$ the following holds:
\begin{align*}
	\divG \bT = {T}^{\alpha\beta}_{~|\alpha} \bg_\beta + T^{\alpha\beta} b_{\alpha\beta}\bg_3. 
\end{align*}
\end{lemma}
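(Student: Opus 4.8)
The plan is to obtain the result directly from the definition of the surface divergence of an operator-valued mapping together with the representation of $\partial_\gamma \bT$ already established in \eqref{extra}. By Definition~\ref{Def1} we have $\divG \bT = (\partial_\alpha \bT)^T \bg^\alpha$, so the only ingredients needed are the transpose of each dyadic product appearing in \eqref{extra} and its action on the contravariant basis vector $\bg^\alpha$. In particular, the genuinely laborious part, namely differentiating the dyadic basis fields $\bg_\alpha \otimes \bg_\beta$ and collecting the Christoffel and curvature contributions, has already been carried out in Theorem~\ref{thm_partial_derivatives}, so this lemma is essentially a corollary.

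First I would specialize \eqref{extra} by renaming its summation index $\alpha \to \mu$ to avoid a clash with the contracted divergence index, and set $\gamma = \alpha$, which gives
\[
\partial_\alpha \bT = T^{\mu\beta}_{~|\alpha} (\bg_\mu \otimes \bg_\beta) + T^{\mu\beta} b_{\alpha\mu} (\bg_3 \otimes \bg_\beta) + T^{\mu\beta} b_{\alpha\beta} (\bg_\mu \otimes \bg_3).
\]
Next I would take the transpose, using $(\bu \otimes \bv)^T = \bv \otimes \bu$, to obtain
\[
(\partial_\alpha \bT)^T = T^{\mu\beta}_{~|\alpha} (\bg_\beta \otimes \bg_\mu) + T^{\mu\beta} b_{\alpha\mu} (\bg_\beta \otimes \bg_3) + T^{\mu\beta} b_{\alpha\beta} (\bg_3 \otimes \bg_\mu).
\]

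The final step is to apply this to $\bg^\alpha$ using $(\bu \otimes \bv)\bw = (\bv \cdot \bw)\bu$ and to contract. The two key simplifications are the duality $\bg_\mu \cdot \bg^\alpha = \delta_\mu^\alpha$ and the orthogonality $\bg_3 \cdot \bg^\alpha = \bn \cdot \bg^\alpha = 0$, the latter holding because $\bg^\alpha$ is tangential. The first and third terms survive and contract through the Kronecker delta, whereas the second term is annihilated since its right factor is $\bg_3$. Summing over $\alpha$ then yields
\[
\divG \bT = (\partial_\alpha \bT)^T \bg^\alpha = T^{\alpha\beta}_{~|\alpha} \bg_\beta + T^{\alpha\beta} b_{\alpha\beta} \bg_3,
\]
which is the claimed identity.

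There is no substantial obstacle in this argument. The only point requiring attention is careful index bookkeeping, in particular keeping the contracted divergence index distinct from the free summation indices in \eqref{extra}, and recalling that the normal direction $\bg_3$ is removed upon pairing with the tangential contravariant basis $\bg^\alpha$. This is exactly what eliminates the cross term $T^{\mu\beta} b_{\alpha\mu}(\bg_\beta \otimes \bg_3)$ and confines the curvature contribution $T^{\alpha\beta} b_{\alpha\beta}$ to the normal component along $\bg_3$.
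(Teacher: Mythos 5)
Your proposal is correct and follows essentially the same route as the paper's own proof: both apply the definition $\divG \bT = (\partial_\alpha \bT)^T \bg^\alpha$ to the contravariant representation of $\partial_\gamma \bT$ from \eqref{extra}, transpose the dyads via $(\bu \otimes \bv)^T = \bv \otimes \bu$, and contract using $\bg_\mu \cdot \bg^\alpha = \delta_\mu^\alpha$ and $\bg_3 \cdot \bg^\alpha = 0$. Your index relabeling $\alpha \to \mu$ plays exactly the role of the paper's index $\gamma$, so the two arguments coincide step for step.
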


\subsubsection{Surface differential operators in Cartesian coordinates} \label{sectCartesian}
We recall  definitions of surface differential operators in terms of representations in Cartesian coordinates as in \cite{Jankuhn_Reusken_Olshanksii}. The partial derivatives w.r.t. the standard basis $\hat{\be}_1,\hat{\be}_2,\hat{\be}_3$ in $\R^3$,  i.e., $\by= y_i \hat{\be}_i$.
are denoted  by $\hat{\partial}_k:=\frac{\partial}{\partial y_k}$. The gradient of a scalar
function $f$ w.r.t. the Cartesian coordinates is given by the  vector $\hatnabla f: = (\hat{\partial}_i f) \hat{\be}_i$. The gradient (Jacobian) of a vector-valued function $\bu$ is given by $\hatnabla \bu := \hat \partial_k \bu \otimes \hat \be_k$, or in matrix notation  $(\hatnabla  \bu)_{ij}=\hat{\partial}_j \bu_i$. Note the structural analogy between $\hatnabla \bu = \hat \partial_k \bu \otimes \hat \be_k$ and the definition of the covariant derivative $\nablaG = \bP\partial_\alpha\bu \otimes \bg^\alpha$ (cf. Definition~\ref{Def1}).

To define Cartesian  \emph{surface} differential operators based on Cartesian representations, we extend functions defined on the surface to a small open neighborhood
$
G_\delta(\G):=G_\delta:=\{\bx \in \R^3 ~|~ \dist(\bx,\G)<\delta \}
$
with some sufficiently small $\delta > 0$.
For a given scalar function $\phi$ on $\Gamma$ a smooth extension to a function defined on $G_\delta$ is denoted by $\phi^e$. Similarly for vector fields and operator-valued functions on $\Gamma$. The specific choice of the  extension is not essential; one may use a constant extension along normals. We now introduce surface differential operators based on  the ``Cartesian gradient'' $\hatnabla$, applied to the extended quantities.
\begin{definition} \label{Def2}
For a scalar function $\phi\in C^1(\Gamma,\R)$  the surface gradient is defined by
\begin{align*}
\hatnablaG \phi := \bP \hatnabla \phi^e.
\end{align*}
For a vector field $\bu\in C^1(\Gamma,\R^3)$ we define the covariant derivative $\hatnablaG \bu$  by
\[
 \hatnablaG \bu := \bP \hatnabla \bu^e \bP.
\]
 The surface divergence of $\bu\in C^1(\Gamma,\R^3)$ and  $\bT \in C^1(\Gamma,L(\R^3, \R^3))$ are  defined by
\begin{align*}
 \hatdivG \bu := \tr (\bP \hatnabla \bu^e \bP), \quad \hatdivG\bT:=\hatdivG
\big( \bT^T\hat{\be}_i \big) \hat{\be}_i.
\end{align*}
\end{definition}
These definitions of the surface differential operators in Cartesian coordinates are independent of the choice of the extension and only depend on the function values on the surface.
Note that the definition of the surface divergence of the operator valued function $\bT$ in Cartesian coordinates is based on the surface divergence of the vector field $\bT^T\hat{\be}_i$. In matrix notation this means that we take the surface divergence of $\bT$ row-wise,   which agrees with the usual definition in the literature (cf. \cite{Jankuhn_Reusken_Olshanksii, Fries2018, BonitoDemlowLicht2020}).

\subsubsection{Relations between the surface differential operators in different coordinate systems} \label{sectrelations}
In this section we derive relations between surface differential operators given in the Definitions~\ref{Def1} and \ref{Def2}.
The results are as expected and have been used (implicitly) at several places in the literature. We did not find, however, proofs of all these basic results in the literature. Therefore we include elementary proofs here.

\begin{theorem}\label{theorem_comparison_local_cartesian}
Let $\phi \in C^1(\G,\R)$, $\bu \in C^1(\G,\R^3)$ and $\bT \in C^1(\G,L(\R^3,\R^3))$.  For the surface gradients, covariant derivatives and surface divergence operators defined in Definitions~\ref{Def1} and \ref{Def2} the following relations hold on $\Gamma$:
\begin{equation} \label{mresult}
\nablaG\phi =\hatnablaG\phi, \quad \nablaG\bu=\hatnablaG\bu, \quad \divG \bu = \hatdivG \bu, \quad \divG \bT =  \hatdivG (\bT^T).
\end{equation}
\end{theorem}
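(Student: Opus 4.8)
The plan is to route everything through one chain-rule identity and then finish each claim with elementary algebra involving the projection $\bP$ and the tensor product. The key observation is that the local partial derivative $\partial_\alpha$ coincides with the directional derivative of a Cartesian extension along the covariant basis vector $\bg_\alpha=\partial_\alpha R$: since $R(\bxi)\in\Gamma$ we have $\bar\phi(\bxi)=\phi^e(R(\bxi))$ and $\bar\bu(\bxi)=\bu^e(R(\bxi))$, so the chain rule gives $\partial_\alpha\phi=\hatnabla\phi^e\cdot\bg_\alpha$ and $\partial_\alpha\bu=\hatnabla\bu^e\,\bg_\alpha$. Because $\bg_\alpha$ is tangent to $\Gamma$ these are purely tangential derivatives, hence independent of the extension, which is precisely why the two sets of definitions can agree. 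Alongside this bridge I would record the projection identities $\bP=\bg_\alpha\otimes\bg^\alpha=\bg^\alpha\otimes\bg_\alpha$ (sum over $\alpha=1,2$) and $\bP\bg^\alpha=\bg^\alpha$. The scalar claim is then immediate: substituting into Definition~\ref{Def1}, $\nablaG\phi=\partial_\alpha\phi\,\bg^\alpha=(\bg_\alpha\cdot\hatnabla\phi^e)\bg^\alpha=(\bg^\alpha\otimes\bg_\alpha)\hatnabla\phi^e=\bP\hatnabla\phi^e=\hatnablaG\phi$.

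For the two vector identities I would use the tensor rules $(\bA\ba)\otimes\bb=\bA(\ba\otimes\bb)$ and $\tr(\ba\otimes\bb)=\ba\cdot\bb$. Inserting $\partial_\alpha\bu=\hatnabla\bu^e\,\bg_\alpha$ into $\nablaG\bu=(\bP\partial_\alpha\bu)\otimes\bg^\alpha$ and pulling $\bP\hatnabla\bu^e$ out of the product gives $\nablaG\bu=\bP\hatnabla\bu^e(\bg_\alpha\otimes\bg^\alpha)=\bP\hatnabla\bu^e\bP=\hatnablaG\bu$, which holds for non-tangential $\bu$ as well. Taking the trace and using $\bP\bg^\alpha=\bg^\alpha$ then yields $\hatdivG\bu=\tr(\hatnablaG\bu)=(\bP\partial_\alpha\bu)\cdot\bg^\alpha=\partial_\alpha\bu\cdot\bg^\alpha=\divG\bu$.

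The operator identity is where I expect the only genuine difficulty, namely bookkeeping the transposes built into both definitions. Unwinding the Cartesian definition applied to $\bT^T$ gives $\hatdivG(\bT^T)=\hatdivG(\bT\hat\be_i)\,\hat\be_i$; since each $\hat\be_i$ is constant, $\bT\hat\be_i$ is an ordinary vector field and the vector result just proved applies, so $\hatdivG(\bT\hat\be_i)=\divG(\bT\hat\be_i)=(\partial_\alpha\bT\,\hat\be_i)\cdot\bg^\alpha$. By definition of the transpose this equals $\hat\be_i\cdot(\partial_\alpha\bT)^T\bg^\alpha$, and summing against $\hat\be_i$ via $(\hat\be_i\cdot\bw)\hat\be_i=\bw$ collapses the expression to $(\partial_\alpha\bT)^T\bg^\alpha=\divG\bT$, the definition in \eqref{defdiv}. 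The main obstacle is thus purely notational: one must remember that the Cartesian $\hatdivG$ acts row-wise on $\bT^T$ while the local $\divG$ carries its own transpose, and line the two up without dropping a factor.
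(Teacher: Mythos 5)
Your proof is correct, and it rests on the same foundation as the paper's: the chain rule identity $\partial_\alpha \bu = \hatnabla \bu^e\,\bg_\alpha$ (the paper writes this in components as $\partial_\alpha\bu\cdot\hat\be_i = (\hat\partial_k\bu^e\cdot\hat\be_i)(\bg_\alpha\cdot\hat\be_k)$) combined with projection algebra. The organization differs in two ways worth noting. First, the paper routes the vector and tensor cases through the auxiliary surface gradient $\nabla_S\bu = \bg^\alpha\otimes\partial_\alpha\bu$ of Remark~\ref{remark_surface_gradient}, proving $\nabla_S\bu = \bP\hatnabla^T\bu^e$ and then invoking $\nablaG\bu = \bP\nabla_S^T\bu$; you bypass $\nabla_S$ entirely by using the dyadic identity $\bP = \bg_\alpha\otimes\bg^\alpha$ and the rule $(\bA\ba)\otimes\bb = \bA(\ba\otimes\bb)$ to get $\nablaG\bu = (\bP\hatnabla\bu^e\bg_\alpha)\otimes\bg^\alpha = \bP\hatnabla\bu^e\bP$ directly, which is shorter and less index-heavy. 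Second, for the fourth identity the paper essentially repeats the trace computation for each row, working out $\divG\bT\cdot\hat\be_i = \trace[\bP\nabla_S(\bT\hat\be_i)\bP]$ from scratch, whereas you derive it as a corollary of the already-proved vector identity applied to the constant-direction fields $\bT\hat\be_i$, using $\partial_\alpha(\bT\hat\be_i) = (\partial_\alpha\bT)\hat\be_i$ and the resolution of identity $(\hat\be_i\cdot\bw)\hat\be_i = \bw$; this reduction makes the transpose bookkeeping (the $\bT^T$ in \eqref{mresult} versus the transpose built into \eqref{defdiv}) completely transparent. What the paper's route buys in exchange is the explicit intermediate relation $\nabla_S\bu = \bP\hatnabla^T\bu^e$, which it reuses verbatim in three of the four claims and which connects to the $\nabla_S$ operator from the cited literature; your route buys brevity and a cleaner logical dependency (tensor case as a one-line consequence of the vector case). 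Both are complete; your extension-independence remark correctly identifies why the tangential directional derivatives are well defined.
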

\begin{proof}
A proof of the first equality can be found e.g. in \cite{Pruess_Simonett_book,DEreview}. For completeness we include an elementary proof. Using the chain rule we get, with $\by=R(\bxi) \in \Gamma $,
\begin{align*}
	\partial_\alpha \phi (\by) &= \partial_\alpha ( \phi \circ R ) (\bxi) = \partial_\alpha ( \phi^e \circ R )(\bxi) = \hat{\partial}_k\phi^e(R(\bxi)) \, (\partial_\alpha R(\bxi) \cdot \hat{\be}_k) \\
	&= \hat{\partial}_k\phi^e(\by)\,(\bg_\alpha\cdot\hat{\be}_k).
\end{align*}
Thus we get
\begin{align*}
\nablaG\phi(\by)&=\partial_\alpha\phi(\by)\bg^\alpha=\big[\hat{\partial}_k\phi^e(\by)\,(\bg_\alpha\cdot\hat{\be}_k)\big]\bg^\alpha=\hat{\partial}_k\phi^e(\by)\big[\underbrace{(\bg_\alpha\cdot\hat{\be}_k)\bg^\alpha }_{\bP\hat{\be}_k}\big]\\
&= \hat{\partial}_k\phi^e(\by)\bP\hat{\be}_k = \bP\big[\hat{\partial}_k\phi^e(\by)\hat{\be}_k\big]=\bP \hatnabla \phi^e(\by).
\end{align*}
For vector fields the transposed Jacobian is given by
\begin{align}\label{proof_vector_function_help_1}
	\hatnablaT \bu^e = \hat{\be}_k \otimes \hat{\partial}_k \bu^e.
\end{align}
Using the chain rule we get, with $\by = R(\bxi) \in \Gamma$,
\begin{align*}
	\partial_\alpha \bu(\by) \cdot \hat{\be}_i &= \partial_\alpha (\bu \circ R) (\bxi) \cdot \hat{\be}_i = \big[ \hat\partial_k \bu^e(R(\bxi))\cdot \hat\be_i \big] \big[ \partial_\alpha R(\bxi) \cdot \hat{\be}_k \big]\\
	&= \big[ \hat\partial_k \bu^e(\by) \cdot \hat\be_i \big] \big[ \bg_\alpha \cdot \hat{\be}_k \big].
\end{align*}
 Combing this with \eqref{proof_vector_function_help_1} and using the surface gradient $\nabla_S \bu = \bg^\alpha \otimes \partial_\alpha \bu$  (cf. Remark~\ref{remark_surface_gradient}) we obtain
\begin{equation} \label{proof_comparison_help_2}
\begin{aligned}
	\nabla_S \bu \, \hat{\be}_i &= (\bg^\alpha \otimes \partial_\alpha \bu) \hat{\be}_i = \bg^\alpha (\partial_\alpha \bu \cdot \hat{\be}_i) = \bg^\alpha \big[ (\hat\partial_k \bu^e \cdot \hat\be_i) (\bg_\alpha \cdot \hat{\be}_k) \big]\\
	&= \big( \hat\partial_k \bu^e \cdot \hat\be_i \big) \bP \hat{\be}_k = \bP \big(\hat{\be}_k \otimes \hat{\partial}_k \bu^e \big) \hat\be_i = \bP \hatnablaT \bu^e \hat\be_i.
\end{aligned}
\end{equation}
Using $\nabla_\Gamma \bu= \bP \nabla_S^T\bu$ completes the proof of the relation for the covariant derivative of $\bu$.
For the surface divergence  of a vector function we get
\begin{align*}
	\divG\bu &= \partial_\alpha\bu \cdot \bg^\alpha = \delta_\gamma^\alpha (\partial_\alpha\bu \cdot \bg^\gamma) = (\bg^\alpha \cdot \bg_\gamma) (\partial_\alpha\bu \cdot \bg^\gamma) = [(\bg^\alpha \otimes \partial_\alpha\bu) \bg^\gamma] \cdot \bg_\gamma\\
	&= (\nabla_S \bu\, \bg^\gamma) \cdot \bg_\gamma \overset{\eqref{proof_comparison_help_2}}{=} (\bP \hatnabla ^T \bu^e \bg^\gamma) \cdot \bg_\gamma = (\bP \hatnabla ^T \bu^e \bP \bg^i) \cdot \bg_i = \trace( \bP \hatnabla \bu^e \bP) \\
	&= \hatdivG \bu.
\end{align*}
For the surface divergence of an operator-valued function $\bT$ we have
\begin{align*}
	\divG \bT \cdot \hat{\be}_i &= ((\partial_\alpha \bT)^T \bg^\alpha) \cdot \hat{\be}_i = \delta_\beta^\alpha \hat{\be}_i \cdot ((\partial_\alpha \bT)^T \bg^\beta) = (\bg^\alpha \cdot \bg_\beta)(\partial_\alpha \bT \hat{\be}_i) \cdot \bg^\beta\\
	& = ((\bg^\alpha \otimes \partial_\alpha \bT \hat{\be}_i) \bg^\beta ) \cdot \bg_\beta = ((\bg^\alpha \otimes \partial_\alpha \bT \hat{\be}_i) \bP \bg^i) \cdot \bP \bg_i \\
	& = \trace \big[\bP (\bg^\alpha \otimes \partial_\alpha \bT \hat{\be}_i) \bP \big] = \trace\big[ \bP \nabla_S (\bT \hat{\be}_i) \bP\big] \overset{\eqref{proof_comparison_help_2}}{=} \trace\big[ \bP \hatnabla^T (\bT^e \hat{\be}_i) \bP \big] \\
	&= \trace\big[ \bP \hatnabla (\bT^e \hat{\be}_i) \bP \big] = \hatdivG( \bT \hat{\be}_i )= \hatdivG(\bT^T)\cdot \hat{\be}_i.
\end{align*}
\end{proof}

Note that in the relation for the surface divergence of $\bT$ in \eqref{mresult} a \emph{transpose} is needed.  This would vanish if either in Definition~\ref{Def1} or in Definition~\ref{Def2} one deletes the transpose in the definition of the surface divergence of $\bT$. The results in Theorem~\ref{theorem_comparison_local_cartesian} confirm that the operators defined in Definition~\ref{Def1} indeed do not depend on the parametrization.

The shape operator, given in curvilinear coordinates in \eqref{eqn_def_weingarten_covariant}, can be represented in the Cartesian coordinate system as $\bB = - \hatnablaG \bn^e$ (proof in the appendix).

\subsubsection{The material derivative} \label{secdiffop_subsectmaterial}
We introduce a derivative in which the time dependence of the parametrization $R(\bxi,t)$, $\bxi \in U$,  is used.
Let $I=(0,T)$ be a time interval with $T >0$ sufficiently small such that for all $\bz \in \Phi_U(U) \subset \G(0)$ the ordinary differential equation \eqref{def_flow_map} has a unique solution for $t \in I$.
We define the (local) evolving surface $\Gamma_U(t)= \{\, \by \in \R^3~|~ \by = R(\bxi,t),~\bxi \in U\,\}$, $t \in I$.   The corresponding  space-time manifold is given by
\begin{align*}
\cS= \cS(U,I):=\bigcup_{t\in I} \Gamma_U(t) \times \{t\} \subset \R^4.
\end{align*}
Note that $\cS$ is parametrized by $\cR: U\times I \to \cS$, $\cR(\bxi,t)=(R(\bxi,t),t)$.
Given the velocity $\bv(\by,t)$, $\by \in \Gamma_U(t)$ from \eqref{def_flow_map} we define  $\bar{\bv}(\bxi,t):=\bv(R(\bxi,t),t)$, $(\bxi,t) \in U \times I$.
Thus we have the relation
\begin{align}
\label{relation_v_R}
	\bar{\bv}(\bxi,t) = \frac{\partial}{\partial t} R(\bxi,t) \quad \text{on}~U \times I.
\end{align}

\begin{definition}
Let $f \in C^1(\cS)$ be a scalar- or vector function and $\bar{f}\in C^1(U \times
I)$ the function defined by $\bar{f}(\bxi,t)=f(R(\bxi,t),t)$ for $(\bxi,t) \in U \times I$. The \emph{material} derivative of $f$ on $\cS$ is defined by
\begin{align*}
\overset{\bullet}{f}(\by,t) := \partial_t \bar{f}(\bxi,t), \quad \by = R(\bxi,t).
\end{align*}
\end{definition}
Clearly this is a definition in terms of the local coordinates $\bxi$ of the surface $\Gamma(0)$.

To obtain a Cartesian representation of the material derivative, we use the same approach as in the previous section and extend the functions defined on the space-time manifold $\cS$ to an open neighborhood $\cG_\delta=\cG_\delta (\cS)$, given by
$
	\cG_{\delta}= \bigcup_{t\in I} G_\delta (\Gamma_U(t)) \times \{ t\}
$.
The neighborhood $G_\delta (\Gamma_U(t))$ of $\G_U(t)$ is as defined in the previous section.
The following lemma yields a representation of the material derivative defined above in terms of derivatives w.r.t. Cartesian coordinates in $\R^3 \times \R$. The result is well-known and easy to prove, based on application of the chain rule. For completeness we include an elementary proof.

\begin{lemma}\label{lemma_material_derivative_cartesian}
Let $\phi \in C^1(\cS,\R)$ and $\bu \in C^1(\cS,\R^3)$ with smooth extensions $\phi^e \in C^1(\cG_\delta,\R)$ and $\bu^e \in C^1(\cG_\delta,\R^3)$.  For the material derivatives of $\phi$ and $\bu$ the following holds:
\begin{equation*}
\overset{\bullet}{\phi}(\by,t) = \partial_t \phi^e(\by,t) + \hatnabla \phi^e (\by,t) \cdot \bv(\by,t), \quad \overset{\bullet}{\bu}(\by,t) =
\partial_t \bu^e(\by,t) + \hatnabla \bu^e (\by,t) \bv(\by,t).
\end{equation*}
\end{lemma}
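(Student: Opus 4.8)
The plan is to apply the chain rule directly to the definition of the material derivative, using the fact that the extension $\phi^e$ (respectively $\bu^e$) agrees with $\phi$ (respectively $\bu$) on $\cS$, and that the time-derivative of the parametrization $R(\bxi,t)$ equals the velocity via~\eqref{relation_v_R}. First I would treat the scalar case. By the definition of the material derivative, $\overset{\bullet}{\phi}(\by,t) = \partial_t \bar\phi(\bxi,t)$ with $\by = R(\bxi,t)$, where $\bar\phi(\bxi,t) = \phi(R(\bxi,t),t)$. Since $\phi^e$ is a smooth extension agreeing with $\phi$ on the space-time manifold, I have $\bar\phi(\bxi,t) = \phi^e(R(\bxi,t),t)$ for all $(\bxi,t) \in U \times I$. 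The function on the right is a composition of the smooth map $(\bxi,t) \mapsto (R(\bxi,t),t)$ with $\phi^e$, so the chain rule applies without any difficulty.

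The key computation is then to differentiate $t \mapsto \phi^e(R(\bxi,t),t)$ at fixed $\bxi$. This gives
\begin{equation*}
\partial_t \big[ \phi^e(R(\bxi,t),t) \big] = \hatnabla \phi^e(R(\bxi,t),t) \cdot \partial_t R(\bxi,t) + \partial_t \phi^e(R(\bxi,t),t),
\end{equation*}
where the first term collects the spatial dependence through $R$ and the second term is the explicit time dependence of the extension. Substituting $\partial_t R(\bxi,t) = \bar{\bv}(\bxi,t) = \bv(R(\bxi,t),t)$ from~\eqref{relation_v_R}, and writing $\by = R(\bxi,t)$, yields exactly $\overset{\bullet}{\phi}(\by,t) = \partial_t \phi^e(\by,t) + \hatnabla \phi^e(\by,t) \cdot \bv(\by,t)$.

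For the vector-valued case I would argue componentwise: apply the scalar result to each Cartesian component $\hat{u}_i = \bu \cdot \hat{\be}_i$, using the analogous identity $\bar{\bu}(\bxi,t) = \bu^e(R(\bxi,t),t)$. The chain rule produces $\partial_t \hat{u}_i^e + \hatnabla \hat{u}_i^e \cdot \bv$ for each $i$; reassembling these into a vector and recognizing that the $i$-th component of $\hatnabla \bu^e \, \bv$ is precisely $\hatnabla \hat{u}_i^e \cdot \bv$ (by the definition $(\hatnabla \bu)_{ij} = \hat\partial_j \bu_i$ given earlier) gives $\overset{\bullet}{\bu}(\by,t) = \partial_t \bu^e(\by,t) + \hatnabla \bu^e(\by,t)\, \bv(\by,t)$.

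I do not anticipate a genuine obstacle here, as the paper itself notes the result is well-known and follows from the chain rule. The only point requiring mild care is the bookkeeping between the intrinsic definition (via $\bar\phi$ on $U \times I$) and the extrinsic extension $\phi^e$ on the neighborhood $\cG_\delta$: one must justify that $\bar\phi(\bxi,t) = \phi^e(R(\bxi,t),t)$, which holds because the extension restricts to the original function on $\cS$ and $(R(\bxi,t),t) \in \cS$. Since the final expression is independent of the choice of extension on the surface (the tangential part of $\hatnabla \phi^e$ is intrinsic and the normal derivative is contracted against $\bv$, whose normal component is geometrically determined), the representation is well-defined, consistent with the intrinsic definition.
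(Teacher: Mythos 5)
Your proposal is correct and follows essentially the same route as the paper's proof: both apply the chain rule to $t \mapsto \phi^e(R(\bxi,t),t)$ at fixed $\bxi$ and then substitute $\partial_t R(\bxi,t) = \bv(R(\bxi,t),t)$ from \eqref{relation_v_R}, with the vector case handled by the same argument (your componentwise reassembly just makes explicit what the paper leaves as ``the same arguments''). Your closing remark on extension-independence is a harmless addition already noted in the surrounding text.
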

\begin{proof}
For $(\by,t) \in \cS$ we write $\by =R(\bxi,t)$.
 We use the chain rule for the function $\phi^e (R(\bxi,t),t) = (\phi^e \circ F)(t)$ with the auxiliary function $F: I \to \R^4,~t \mapsto (R(\bxi,t),t)$ and get
\begin{align}
	\overset{\bullet}{\phi} (\by,t)
	&= \partial_t \bar{\phi}(\bxi,t) = \frac{d}{dt} \phi (R(\bxi,t),t) \nonumber \\
	&= \sum_{k=1}^3 \hat{\partial}_k \phi^e(R(\bxi,t),t) \left( \frac{\partial}{\partial t} R(\bxi,t) \cdot \hat{\be}_k \right) + \partial_t \phi^e(R(\bxi,t),t) \cdot 1 \nonumber \\
	&= \partial_t \phi^e(R(\bxi,t),t) + \hatnabla \phi^e(R(\bxi,t),t) \cdot \frac{\partial}{\partial t} R(\bxi,t). \label{material_derivative_R}
\end{align}
Using $\by = R(\bxi,t)$ and relation \eqref{relation_v_R}, we obtain
\begin{equation*}
\overset{\bullet}{\phi} (\by,t) = \partial_t \phi^e(\by,t) + \hatnabla \phi^e(\by,t) \cdot \hat{\bv}(\by,t).
\end{equation*}
The same arguments can be used to derive the relation for $\bu$.
\end{proof}

The material derivative is used, for example, in the Leibniz rule or transport theorem for an arbitrary material subdomain $\gamma(t)\subset \Gamma_U(t)$:
\begin{align}
\label{eqn_Leibniz_rule}
\frac{d}{dt} \int_{\gamma(t)} f~ds=\int_{\gamma(t)} \overset{\bullet}{f}+ f \divG\bv ~ds,
\end{align}
for $f \in C^1(\cS,\R)$.
\subsection{Time derivative of first fundamental form} \label{sectE}
In this section we consider a time derivative of the first fundamental form, which will be used in the remainder. The local coordinate system introduced in Section~\ref{seccoordinate}  depends on the time variable $t$, cf. \eqref{mapR} and Section~\ref{seccoordanddiffop}. In particular for the covariant basis $\bg_\alpha= \partial_\alpha R$ we have $\bg_\alpha= \bg_\alpha(\bxi,t)$, $\bxi \in U$, $t \in I$. Hence the first fundamental form, cf.  \eqref{firstform}, depends not only on  $\bxi$ but also on the time variable,  $g_{\alpha \beta}=g_{\alpha \beta}(\bxi,t)$. The change (as function of time) of the  metric tensor is determined by the velocity field $\bv$, which determines the time dependence of the parametrization $R=R(\bxi,t)=\Phi_t(\Phi_U(\bxi))$ via the flow map $\Phi_t$.
Using Theorem \ref{thm_partial_derivatives}, the following relation for the time derivative of the metric tensor is derived (recall $\bv = v_i \bg^i= v^i \bg_i$):
\begin{align}
	\frac{\partial}{\partial t} g_{\alpha\beta} &= \partial_t \bg_\alpha \cdot \bg_\beta + \bg_\alpha \cdot \partial_t \bg_\beta = \partial_t \partial_\alpha R \cdot \bg_\beta + \bg_\alpha \cdot \partial_t \partial_\beta R = \partial_\alpha \bv \cdot \bg_\beta + \partial_\beta \bv \cdot \bg_\alpha \nonumber \\
	& = \big( ( v_{\gamma|\alpha} - b_{\alpha\gamma} v_3 ) \bg^\gamma + ( v_{3|\alpha} + b_\alpha^\gamma v_\gamma ) \bg^3 \big)  \cdot \bg_\beta \nonumber \\
	&\hphantom{=} + \left( (v_{\gamma|\beta} - b_{\beta\gamma} v_3 ) \bg^\gamma + ( v_{3|\beta} + b_\beta^\gamma v_\gamma ) \bg^3 \right) \cdot \bg_\alpha  \nonumber \\
	&=v_{\beta|\alpha}  + v_{\alpha|\beta}  - 2 v_3 b_{\alpha\beta}.\label{eqE}
\end{align}
For this time derivative of the metric tensor, scaled with a factor $\frac12$, we  introduce the notation
\begin{equation} \label{defE}
E_{\alpha \beta}:=\frac12 \frac{\partial}{\partial t} g_{\alpha\beta}.
 \end{equation}
For a given $(\bxi,t) \in \cS$ a corresponding linear operator $\bE=\bE(\bxi,t): \R^3 \to \R^3$ is given by $\bE := E_{\alpha\beta} (\bg^\alpha \otimes \bg^\beta)=E^{\alpha\beta} (\bg_\alpha \otimes \bg_\beta)$. This operator can also be expressed in terms of the  covariant derivatives introduced in the Definitions~\ref{Def1} and \ref{Def2} as shown in the following lemma. A proof of this lemma is given in the Appendix.
\begin{lemma}\label{lemma_rate_strain_nabla_representation}
The following relations hold:
\begin{equation} \label{reslemma_rate_strain_nabla_representation}
\bE = \dfrac{1}{2} (\nablaG \bv + \nablaG^T \bv) = \frac{1}{2}  (\hatnablaG \bv + \hatnablaG^T \bv).
\end{equation}
\end{lemma}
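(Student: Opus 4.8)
The plan is to verify the first identity by matching covariant components in the curvilinear basis, and then to obtain the second identity as an immediate consequence of Theorem~\ref{theorem_comparison_local_cartesian}. First I would recall, from the definition $\bE = E_{\alpha\beta}(\bg^\alpha\otimes\bg^\beta)$ together with \eqref{defE} and \eqref{eqE}, that the covariant components of $\bE$ are
\[
E_{\alpha\beta} = \frac12\big(v_{\alpha|\beta} + v_{\beta|\alpha}\big) - v_3\, b_{\alpha\beta},
\]
where $v_3 = \bv\cdot\bg_3 = \bv\cdot\bn = v_N$. It therefore suffices to show that the covariant components of $\frac12(\nablaG\bv + \nablaG^T\bv)$ agree with this expression.

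Next I would compute the covariant components of $\nablaG\bv$. By Definition~\ref{Def1}, $\nablaG\bv = (\bP\partial_\alpha\bv)\otimes\bg^\alpha$. Projecting the representation \eqref{eqn_partial_deriv_u} onto the tangent plane annihilates the $\bg^3$-term (since $\bP\bg^3 = \bP\bn = 0$), which leaves $\bP\partial_\alpha\bv = (v_{\beta|\alpha} - b_{\alpha\beta}v_3)\bg^\beta$, so that
\[
\nablaG\bv = (v_{\beta|\alpha} - b_{\alpha\beta}v_3)\,(\bg^\beta\otimes\bg^\alpha).
\]
After relabeling the summation indices, and using the symmetry $b_{\alpha\beta} = b_{\beta\alpha}$ of the second fundamental form, the $(\alpha,\beta)$ covariant component of $\nablaG\bv$ is $v_{\alpha|\beta} - b_{\alpha\beta}v_3$. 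Since the transpose in the outer-product notation swaps the two factors, $(\bg^\alpha\otimes\bg^\beta)^T = \bg^\beta\otimes\bg^\alpha$, the $(\alpha,\beta)$ component of $\nablaG^T\bv$ equals the $(\beta,\alpha)$ component of $\nablaG\bv$, namely $v_{\beta|\alpha} - b_{\alpha\beta}v_3$. Adding the two and dividing by two gives exactly $E_{\alpha\beta}$, which establishes $\bE = \frac12(\nablaG\bv + \nablaG^T\bv)$.

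Finally, the second equality is immediate: Theorem~\ref{theorem_comparison_local_cartesian} asserts $\nablaG\bv = \hatnablaG\bv$, whence also $\nablaG^T\bv = \hatnablaG^T\bv$, and substituting into the already-proved first identity yields $\bE = \frac12(\hatnablaG\bv + \hatnablaG^T\bv)$. The computation is essentially routine; the only places that require care are the correct reading of the transpose in the dyadic notation and the bookkeeping of the curvilinear indices, together with the two elementary facts $v_3 = v_N$ and $b_{\alpha\beta} = b_{\beta\alpha}$, which are precisely what make the two curvature contributions combine into the single term $-v_3 b_{\alpha\beta}$.
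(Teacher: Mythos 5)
Your proof is correct and follows essentially the same route as the paper's: both arguments verify the first identity componentwise in the covariant basis, using the representation \eqref{eqn_partial_deriv_u} of $\partial_\alpha \bv$ together with \eqref{eqE} and the symmetry $b_{\alpha\beta}=b_{\beta\alpha}$, and both obtain the Cartesian version immediately from Theorem~\ref{theorem_comparison_local_cartesian}. The only difference is organizational: the paper splits $\bv = \bv_T + v_N\bn$ and symmetrizes the covariant derivatives of the two pieces separately, whereas you project \eqref{eqn_partial_deriv_u} directly for the full field $\bv$ and read off the components of $\nablaG\bv$ and $\nablaG^T\bv$, a slightly more streamlined bookkeeping of the same computation.
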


\section{Derivations of surface Navier-Stokes equations} \label{secderivation}
In this section we outline five different  derivations of surface Navier-Stokes equations known from the literature \cite{Hu_Zhang,Jankuhn_Reusken_Olshanksii,Gigaetal,Miura_17,NitschkeReutherVoigt2019}, which  use both different physical principles and representations in different coordinate systems. In the five subsections below we present, in a unified framework, the following derivations:

\begin{enumerate}[(1)]
	\item \label{ansatz_hu} In \cite{Hu_Zhang} as physical principles the conservation laws of \emph{surface} mass and momentum quantities are used. Surface Navier-Stokes equations in \emph{curvilinear coordinates} are derived.
	\item \label{ansatz_reusken} In \cite{Jankuhn_Reusken_Olshanksii} the same  conservation laws of \emph{surface} mass and momentum quantities as in \cite{Hu_Zhang} are used and surface Navier-Stokes equations in \emph{Cartesian coordinates} in $\R^3$ are derived.
	\item \label{ansatz_giga} In \cite{Gigaetal} the same surface mass conservation law as in \cite{Hu_Zhang,Jankuhn_Reusken_Olshanksii} is used. Instead of a surface momentum conservation principle a \emph{variational energy principle} is used. The equations are derived in \emph{Cartesian coordinates} in $\R^3$.
	\item \label{ansatz_miura} In \cite{Miura_17} as physical principles the conservation laws of \emph{volume} mass and momentum quantities are used in a thin tubular neighborhood of the (evolving) surface. Combined with a thin film limit procedure, surface Navier-Stokes equations are derived in \emph{Cartesian coordinates}.
\item \label{ansatz_voigt} In \cite{NitschkeReutherVoigt2019} the same  physical principles of \emph{volume} mass and momentum conservation in a thin tubular neighborhood as in \cite{Miura_17} are used. The resulting volume Navier-Stokes equations are represented in a thin film \emph{curvilinear} local coordinate system. A thin film limit procedure is applied to derive \emph{tangential} surface Navier-Stokes equations in curvilinear coordinates.
\end{enumerate}

In the approaches \ref{ansatz_hu}, \ref{ansatz_reusken}, \ref{ansatz_miura}, \ref{ansatz_voigt} one uses an ansatz for the viscous stress tensor, namely the standard Newtonian tensor in \ref{ansatz_miura} and \ref{ansatz_voigt} and the Boussinesq-Scriven tensor in \ref{ansatz_hu} and \ref{ansatz_reusken}. In \ref{ansatz_giga} an ansatz for the viscous surface dissipation energy is used.
Below we outline only the key ideas of the derivations and refer to the corresponding papers for more details. 

\subsection{Surface mass and momentum conservation in curvilinear coordinates} \label{secderivation1}
In this section, a derivation of surface Navier-Stokes equations along the same lines as in \cite{Hu_Zhang} is presented. In that paper the resulting surface Navier-Stokes equations are formulated in tensor calculus \emph{without} using surface differential operators like $\nabla_\Gamma$ and $\divG$. To be able to compare the resulting equations with those obtained in the other approaches, we rewrite these using the differential operators introduced in Section~\ref{sectlocal} and results derived in Section~\ref{sectE}.

The derivation is based on conservation laws of mass and momentum. We assume the surface to be inextensible, i.e. $\frac{d}{dt}\int_{\gamma(t)}1~ds=0$ holds for an arbitrary material subdomain $\gamma(t)\subset\G(t)$. The Leibniz rule \eqref{eqn_Leibniz_rule} and the arbitrariness of $\gamma(t)$ yield
\begin{equation} \label{div1}
\divG \bv=0.
\end{equation}
Let $\rho$ denote the surface mass density. Conservation of mass, the Leibniz rule and $\divG \bv=0$ lead to
\begin{equation*}
0 = \dfrac{d}{dt} \int_{\gamma(t)} \rho ~ds = \int_{\gamma(t)} \overset{\bullet}{\rho} ~ds.
\end{equation*}
Arbitrariness of $\gamma(t)$ and a smoothness assumption on $\rho$ imply $\overset{\bullet}{\rho} = 0$. Hence, if $\rho$ is constant on $\Gamma(0)$, which we assume here, it follows that the surface mass density $\rho$ is constant on the evolving surface $\Gamma(t)$.

As ansatz for surface momentum conservation the equation
\begin{align}\label{eqn_dynamic_eqn}
    \dfrac{d}{dt} \int_{\gamma(t)} \rho \bv ~ds = \bF(\gamma(t))
\end{align}
is used,
with a  force $\bF$ decomposed into external area forces acting on $\gamma(t)$ and internal forces acting on the boundary $\partial\gamma(t)$.
\begin{remark} \label{intrem}
 \rm The (surface) integral of a vector valued function $\int_{\gamma(t)}  \bu \, ds$, cf. \eqref{eqn_dynamic_eqn}, is defined in the usual way. We chose a fixed (not necessarily orthogonal) basis of $\R^3$, say $\bw_1,\bw_2,\bw_3$. For $\bu(s)= u^i(s) \bw_i$,  we then define $\int_{\gamma(t)} \bu(s) \, ds=  \bw_i \int_{\gamma(t)} u^i(s) \, ds $. The results derived below are independent of the choice of $\bw_1,\bw_2,\bw_3$.
\end{remark}

We collect the external forces, consisting of normal and shear stresses, in the force term $\bff=f^\alpha \bg_\alpha + f_N \bn$. For the internal forces the Cauchy ansatz is made,  i.e., these forces are of the form $\bT \nu$ with a stress tensor $\bT$ and $\nu$  the in-plane unit normal on $\partial\gamma(t)$.
Using  $\bT \nu = T^{\alpha\beta} \nu_\alpha \bg_\beta$ the total net force on $\gamma(t)$ can be written as
\begin{equation} \label{h1}
    \bF(\gamma(t)) = \int_{\gamma(t)} \bff ~ds + \int_{\partial\gamma(t)} T^{\alpha\beta} \nu_\alpha \bg_\beta~ds.
\end{equation}
 As in \cite{Hu_Zhang}, we apply the Leibniz rule on the left-hand side of \eqref{eqn_dynamic_eqn} and Greens formula on the boundary integral of the right-hand side of \eqref{h1}. Using $\divG \bv=0$, this yields
\begin{equation*}
\int_{\gamma(t)} \rho \overset{\bullet}{\bv} ~ds = \int_{\gamma(t)} \bff + T^{\beta \alpha}_{|\beta} \bg_\alpha + T^{\alpha\beta} b_{\alpha\beta} \bn ~ds.
\end{equation*}
Due to the arbitrariness of $\gamma(t)$, we obtain the following  system of surface partial differential equations (cf. \cite[equation (31)]{Hu_Zhang}):
\begin{equation} \label{NS1}
	\rho (\overset{\bullet}{\bv} \cdot \bg^\alpha) = f^\alpha + T^{\beta \alpha}_{|\beta}, \quad	\rho (\overset{\bullet}{\bv} \cdot \bn) = f_N + T^{\alpha\beta} b_{\alpha\beta},
\end{equation}
which consists of two equations for tangential velocity change $\overset{\bullet}{\bv} \cdot \bg^\alpha$ and one equation for velocity change in normal direction $\overset{\bullet}{\bv} \cdot \bn$.
As ansatz for the stress tensor $\bT$ the Boussinesq-Scriven form (in curvilinear coordinates)
\begin{equation} \label{NS2}
    T^{\alpha\beta} = -pg^{\alpha\beta} + 2\mu_0 E^{\alpha\beta}
\end{equation}
is used,
which involves the surface pressure $p$, the viscosity coefficient $\mu_0$ and  the time derivative of the metric tensor $E^{\alpha\beta}$, cf. \eqref{defE}.
\emph{The equations \eqref{div1}, \eqref{NS1} and \eqref{NS2} form the surface Navier-Stokes system derived in} \cite{Hu_Zhang}.

To be able to compare this surface Navier-Stokes system, which is formulated in terms of curvilinear coordinates, to equations derived in the sections below, we rewrite these equations using surface differential operators, cf. Definition~\ref{Def1}. From Lemma~\ref{lemma_rate_strain_nabla_representation} we obtain for the rate of strain tensor $\bE= E_{\alpha \beta}(\bg^\alpha \otimes \bg^\beta)$  the representation
\[ \bE= \bE(\bv) = \dfrac{1}{2} (\nablaG \bv + \nablaG^T \bv),\]
and thus the operator representation
\begin{equation}
\label{eq_Boussinesq-Scriven_form}
\bT = -p \bP + 2\mu_0 \bE
\end{equation}
for the stress tensor. From Lemma~\ref{lemma_divergence_local_representation} we get that the equations \eqref{NS1} can be rewritten as
\[
  \rho \overset{\bullet}{\bv} = \bff + \divG(\bT).
\]
Using this and the identity $\divG (p \bP) = \nablaG p + p \kappa \bn$, we obtain the following  representation of the surface Navier-Stokes system \eqref{div1}, \eqref{NS1}-\eqref{NS2} in terms of the surface differential operators as in Definition~\ref{Def1}:
\begin{align}
    \begin{cases}\label{Hu_Zhang_eqn_NS_nabla}
        \rho \overset{\bullet}{\bv} = \bff -\nablaG p - p \kappa \bn + 2 \mu_0 \divG \bE(\bv),\\
        \divG \bv = 0.
    \end{cases}
\end{align}

\subsection{Surface mass and momentum conservation in Cartesian coordinates}
We recall the model derived in \cite{Jankuhn_Reusken_Olshanksii}. It is based on the same fundamental laws of surface continuum mechanics as in the previous section. The formulation of the equations, however, is in Cartesian coordinates in $\R^3$. Hence, the surface differential operators ($\hatnablaG$ and $\hatdivG$) used are as in Definition~\ref{Def2}. The material derivative $\overset{\bullet}{\bv}$ is defined  in Cartesian coordinates as formulated in Lemma~\ref{lemma_material_derivative_cartesian}. Using the Leibniz rule, the inextensibility condition
$
\frac{d}{dt} \int_{\gamma(t)} 1 ~ds=0
$
yields
\begin{equation} \label{div2}
\hatdivG\bv=0.
\end{equation}
From mass conservation $
\frac{d}{dt} \int_{\gamma(t)} \rho ~ds=0$
we obtain, with the same arguments as in the previous section, that $\rho$ remains constant on $\Gamma(t)$ if it is constant on $\Gamma(0)$.
We now consider the conservation of surface momentum, expressed  by the equation
\begin{equation} \label{mconservation}
\frac{d}{dt} \int_{\gamma(t)} \rho \bv ~ds = \int_{\gamma(t)} \bff \, ds +\int_{\partial\gamma(t)} \bff_{\nu} \, ds,
\end{equation}
with a contact force term $\bff_{\nu}$ on $\partial\gamma(t)$ and an area
force term $\bff$.
The integrals are defined as in Remark~\ref{intrem}.
As in the previous section, for the contact force term a Cauchy ansatz and Boussinesq-Scriven ansatz are used:
\begin{equation} \label{Contactf}
 \bff_\nu=\bT \nu, \quad \bT=-p\bP + 2\mu_0\bE(\bv), \quad \bE(\bv)= \frac12 (\hatnablaG \bv + \hatnablaG^T \bv).
\end{equation}
Lemma \ref{lemma_rate_strain_nabla_representation} shows that the definition of the rate of strain tensor $\bE$ equals the one from \cite{Hu_Zhang}, cf. equation \eqref{eq_Boussinesq-Scriven_form}.
 From the Stokes theorem and the identity $\hatdivG(p\bP)=\hatnablaG p + p \kappa \bn$, we obtain the  momentum balance for $\gamma(t)$:
\begin{align*}
\frac{d}{dt} \int_{\gamma(t)} \rho \bv \, ds = \int_{\gamma(t)}\bff -\hatnablaG p - p \kappa \bn+ 2 \mu_0 \hatdivG \bE(\bv) \, ds.
\end{align*}
Using the Leibniz rule and combining the result with \eqref{div2}, we obtain the following surface Navier-Stokes system:
\begin{align}
	\begin{cases}\label{J_R_O:full_Navier_Stokes}
		\rho \overset{\bullet}{\bv} = \bff- \hatnablaG p  - p \kappa \bn+ 2 \mu_0 \hatdivG \bE(\bv), \\
		\hatdivG \bv =0.
	\end{cases}
\end{align}
Based on Theorem~\ref{theorem_comparison_local_cartesian} we conclude that this PDE system is exactly the same as in \eqref{Hu_Zhang_eqn_NS_nabla}. This is not surprising, since the derivations of the two systems start from exactly the same physical principles.

\subsection{Energetic variational principle in Cartesian coordinates}
In this section, we summarize the variational approach presented in \cite{Gigaetal} to derive a surface Navier-Stokes system. This derivation is performed in Cartesian coordinates in $\R^3$. It is assumed that $\Gamma(t)$ is a closed surface.
First, in exactly the same way as in the sections above, inextensibility and mass conservation lead to the equation (in Cartesian coordinates)
\begin{align}
	\label{K_L_G:div_u=0}
	\hatdivG \bv &= 0,
\end{align}
and that $\rho$ is constant on $\cS$.
Instead of a momentum conservation ansatz as in \cite{Hu_Zhang,Jankuhn_Reusken_Olshanksii} (cf. equation \eqref{mconservation}) and an energetic variational approach based on the so-called  Least Action and  Minimum Dissipation Principles is used.
We outline the key steps.

The so-called action integral (``kinetic energy'') is defined by
\begin{equation*}
	A(\bx):=\int_0^T\int_{\G(t)} \dfrac{1}{2} \rho |\bv(\bx,t)|^2\,d\bx\,dt.
\end{equation*}
Recall that $\bx=\bx(\bz,t) \in \Gamma(t)$, $\bz \in \Gamma(0)$, are the particle trajectories and $\bv(\bx,t)$ the corresponding velocity fields, cf. \eqref{def_flow_map}.
Note that $\G(t)$ and $\bv$ are uniquely determined by the trajectories $\bx(\bz,t)$. The variation of the action integral with respect to $\bx(\bz,t)$ can be formally written as
\begin{equation*}
	D_{\bx}A(\bx)(\bw) = \int_0^T \int_{\Gamma(t)} F_{\textrm{cons}} \cdot \bw\, ds\,dt=:\langle F_{\textrm{cons}},\bw\rangle
\end{equation*}
for a ''suitable'' class of admissible velocities $\bw$. This relation defines a conservative force $F_{\textrm{cons}}$, cf. \cite{Xu_Sheng_Liu}. In \cite[Theorem 1.5]{Gigaetal} it is shown that under reasonable assumptions
\begin{equation}\label{K_L_G:F_cons}
	F_{\textrm{cons}}= -\rho \overset{\bullet}{\bv}
\end{equation}
holds. Another force, the so-called dissipation force, is derived from  variation of ''surface viscosity'' energy,  which is modeled by the  functional
\begin{equation}
   E_{\textrm{diss}}(\bv)=- \int_0^T \int_{\Gamma(t)} \mu_0 |\bE(\bv)|^2
\, ds\, dt,
\end{equation}
with viscosity coefficient $\mu_0$ and a strain tensor $\bE$ as in \eqref{Contactf}, cf. \cite{Gigaetal}. Variation with respect to the velocity field $\bv$ leads to the dissipation force
\begin{equation*}
   D_{\bv} E_{\textrm{diss}}(\bv)(\bw) = \langle F_{\textrm{diss}},\bw\rangle,
\end{equation*}
for a ''suitable'' class of admissible velocities $\bw$, cf. \cite{Xu_Sheng_Liu}. In \cite[Theorem 1.6]{Gigaetal} the relation
\begin{equation}\label{K_L_G:F_diss}
	F_{\textrm{diss}}= 2\mu_0 \hatdivG \bE(\bv)
\end{equation}
 is derived.
The Onsager principle (cf. \cite{Xu_Sheng_Liu,Onsager_I,Onsager_II}) states that the dynamics of  a system is determined by a competition between internal energy (here the kinetic energy) and dissipation. In our setting the corresponding equation is formally given by
\begin{equation}
    D_{\bx} A=-D_{\bv} E_{\textrm{diss}},
\end{equation}
cf. \cite[p. 385]{Gigaetal}. This implies $\langle F_{\textrm{cons}}+F_{\textrm{diss}},\bw\rangle =0$ for all admissible velocity fields $\bw$.  Due to the fact that we consider incompressible surface flows we restrict to velocity fields $\bw$ with $\hatdivG \bw=0$. The following corollary is based on \cite[Lemma 2.7]{Gigaetal}.
\begin{corollary}\label{K_L_G:cor_int_f_dot_phi=0}
Let $\mathbf{g} \in C(\cS)^3$ be such that $\langle \mathbf{g}, \bw\rangle = 0$ for all $\bw\in C^\infty(\cS)$ with $\hatdivG\bw = 0$. Then there exists $p\in C^1(\cS)$ such that
\begin{align*}
\mathbf{g} = \hatnablaG p + p\kappa\bn.
\end{align*}
\end{corollary}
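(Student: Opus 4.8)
The plan is to reduce the statement to a de Rham--type representation by exhibiting the operator $p \mapsto \hatnablaG p + p\kappa\bn$ as (minus) the formal adjoint of $\hatdivG$ with respect to the $L^2(\cS)$ pairing $\langle\cdot,\cdot\rangle$. Concretely, I first want the integration-by-parts identity
\begin{equation*}
\langle \hatnablaG p + p\kappa\bn,\,\bw\rangle = -\langle p,\,\hatdivG\bw\rangle_{\cS}
\end{equation*}
for every scalar $p$ and every vector field $\bw$, where the right-hand pairing is the scalar $L^2(\cS)$ product. Once this is available, the hypothesis $\langle\mathbf{g},\bw\rangle = 0$ for all $\hatdivG$-free $\bw$ says exactly that $\mathbf{g}$ annihilates the kernel of $\hatdivG$, and the desired conclusion is that $\mathbf{g}$ lies in the range of this adjoint.

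To establish the identity I would, for each fixed $t$, combine the product rule $\hatdivG(p\bw) = \hatnablaG p\cdot\bw + p\,\hatdivG\bw$ with the divergence theorem on the \emph{closed} surface $\Gamma(t)$. Splitting $\bw = \bP\bw + (\bw\cdot\bn)\bn$, the tangential part integrates to zero (Gauss theorem, no boundary), while the normal part contributes $\hatdivG\big((\bw\cdot\bn)\bn\big) = (\bw\cdot\bn)\,\hatdivG\bn$ since $\hatnablaG(\bw\cdot\bn)$ is tangential; and by the relation $\bB = -\hatnablaG\bn^e$ with $\kappa=\tr(\bB)$ established above, $\hatdivG\bn = -\kappa$. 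This gives $\int_{\Gamma(t)}\hatdivG(p\bw)\,ds = -\int_{\Gamma(t)} p\kappa\,(\bn\cdot\bw)\,ds$; inserting the product rule and integrating in $t$ yields the adjoint relation.

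With the adjoint identity in hand, I would define a linear functional $\Lambda$ on the range of $\hatdivG$ by $\Lambda(\hatdivG\bw) := -\langle\mathbf{g},\bw\rangle$. This is well defined precisely because of the hypothesis: if $\hatdivG\bw_1 = \hatdivG\bw_2$ then $\bw_1-\bw_2$ is $\hatdivG$-free and hence annihilated by $\mathbf{g}$. Invoking the de Rham--type representation of \cite[Lemma 2.7]{Gigaetal} (equivalently, Riesz representation once the range of $\hatdivG$ is known to be closed), I obtain $p \in C^1(\cS)$ with $\Lambda(q) = \langle p,q\rangle$ for all $q$ in the range. Unwinding, $\langle\mathbf{g},\bw\rangle = -\langle p,\hatdivG\bw\rangle = \langle\hatnablaG p + p\kappa\bn,\bw\rangle$ for \emph{every} $\bw$, and since $\bw$ is arbitrary the fundamental lemma of the calculus of variations forces the pointwise identity $\mathbf{g} = \hatnablaG p + p\kappa\bn$.

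The main obstacle is this last step: producing the potential $p$ with the asserted $C^1$ regularity. The well-definedness of $\Lambda$ is immediate, but representing it by an $L^2$ (indeed $C^1$) function requires that $\hatdivG$ has closed range on the closed surfaces $\Gamma(t)$ --- a surface analogue of the classical de Rham lemma --- together with elliptic surface regularity to upgrade $p$; this is exactly the technical content absorbed into \cite[Lemma 2.7]{Gigaetal}. A secondary point is joint regularity in the space--time variable: since $\Gamma(t)$ varies smoothly and $\kappa\not\equiv 0$ on a closed surface (which also yields uniqueness of $p$), the $t$-parametrized pressures assemble into a single $p\in C^1(\cS)$.
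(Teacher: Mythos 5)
The paper does not actually prove this corollary: it states it as ``based on \cite[Lemma 2.7]{Gigaetal}'' and defers entirely to that citation, so your proposal should be judged as a reconstruction of the missing argument, and as such it is sound and consistent with the paper's conventions. Your integration-by-parts identity is correct under the paper's signs: since $\bB=-\hatnablaG\bn^e$ and $\kappa=\tr(\bB)$, one has $\hatdivG\bn=-\kappa$, and your resulting adjoint relation $\langle \hatnablaG p + p\kappa\bn,\,\bw\rangle = -\langle p,\hatdivG\bw\rangle$ is exactly dual to the identity $\hatdivG(p\bP)=\hatnablaG p + p\kappa\bn$ that the paper uses in Sections 4.1--4.2 (the product rule $\hatdivG(p\bw)=\hatnablaG p\cdot\bw + p\,\hatdivG\bw$ does hold for non-tangential $\bw$, as one checks from $\tr\bigl(\bP(\bw\otimes\hatnabla p^e)\bP\bigr)=\hatnablaG p\cdot\bw$). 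Your uniqueness remark via $\kappa\not\equiv 0$ on a closed surface is also correct. The one place where you are candid rather than complete is precisely where the paper is silent too: representing the functional $\Lambda$ by a $C^1$ function requires a surface de Rham lemma with closed range of $\hatdivG$ plus elliptic regularity, and additionally a localization in time (testing with $\bw$ supported near a time slice) before the $t$-parametrized pressures can be assembled on $\cS$; you name both issues and delegate them to \cite[Lemma 2.7]{Gigaetal}, which is the same level of rigor the paper itself adopts. In short: no gap relative to the paper's own standard, and your adjoint-operator framing adds the explanation of \emph{why} the term $p\kappa\bn$ must accompany $\hatnablaG p$, which the paper leaves implicit.
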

Applying this corollary, we obtain $F_{\textrm{cons}}+F_{\textrm{diss}}= \hatnablaG p + p\kappa\bn$ for a suitable (pressure) function $p$. Combining this with \eqref{K_L_G:div_u=0}, \eqref{K_L_G:F_cons} and \eqref{K_L_G:F_diss} one obtains the following surface Navier-Stokes equations:
\begin{align}\label{K_L_G:full_Navier_Stokes}
	\begin{cases}
		\rho \overset{\bullet}{\bv} -2 \mu_0 \hatdivG \bE(\bv)= -\hatnablaG p - p\kappa \bn,\\
		\hatdivG \bv = 0.
	\end{cases}
\end{align}
Note that this system is exactly the same as in \eqref{J_R_O:full_Navier_Stokes}, if in the latter we restrict to the case without (outer) area forces, i.e. $\bbf=0$.
\subsection{Thin film approach in Cartesian coordinates} \label{sectMiura}
A different approach for deriving surface Navier-Stokes equations, based on a thin film limit procedure, is introduced in \cite{Miura_17}. It is assumed that $\Gamma(t)$ is a smoothly evolving closed surface. Around this surface a thin film domain $\Omega_\varepsilon(t):=\{\bx\in\R^3\,|\,\dist(\bx,\G(t))<\varepsilon \} $ with a sufficiently small $\varepsilon>0$ is defined, in which the incompressible three-dimensional Navier-Stokes equations with appropriate boundary conditions on $\partial \Omega_\varepsilon(t)$ are given. These equations describe mass and momentum conservation in the volume domain $\Omega_\varepsilon(t)$. One then studies the limit of the thickness going to zero and the resulting surface equations. In \cite{Miura_17} these limit equations are derived using formal asymptotic expansions (in the parameter $\varepsilon$). We outline a few key steps in the derivation and refer to \cite{Miura_17} for further explanations.

The signed distance function to $\Gamma(t)$ is denoted by $d(\cdot,t)$.  For $\varepsilon$ sufficiently small the closest point projection of $\bx \in \Omega_\varepsilon(t)$ is given by $\pi(\bx,t) = \bx - d(\bx,t) \bn(\bx,t)$.  We define the space-time domain $Q_{\varepsilon,I}$ and its boundary $\partial Q_{\varepsilon,I}$ by
\begin{align} \label{spacetime}
Q_{\varepsilon,I}:=\underset{t \in I}{\bigcup}\Omega_\varepsilon(t)\times \{t\},\qquad \partial Q_{\varepsilon,I}:=\underset{t \in I}{\bigcup}\partial\Omega_\varepsilon(t)\times \{t\}.
\end{align}
The unit outward normal vector $\bn_\varepsilon(\bx,t)$ and outward normal velocity $V_\varepsilon(\bx,t)$ on $\partial \Omega_\varepsilon$ are given by
\begin{align*}
	\bn_\varepsilon(\bx,t) = \begin{cases}
		~~\bn(\pi,t), ~\text{ if } d(\bx,t) = \varepsilon,\\
	 	 -\bn(\pi,t), ~ \text{ if } d(\bx,t) = -\varepsilon,
   	\end{cases} 	V_\varepsilon(\bx,t)=  \begin{cases}
		~~ V_\G(\pi,t),~ \text{ if } d(\bx,t) = \varepsilon,\\
	 	  -V_\G(\pi,t), ~ \text{ if } d(\bx,t) = -\varepsilon,
	\end{cases}
\end{align*}
with $\pi = \pi(\bx,t)$ and  $V_\G$ the normal velocity  of the surface $\G(t)$.

We consider an incompressible Navier-Stokes system in $Q_{\varepsilon,I}$ with (perfect slip) Navier boundary conditions:
\begin{equation} \label{Miura:Navier_Stokes_in_domain}
	\begin{aligned}
		\partial_t \bv_\varepsilon + (\bv_\varepsilon \cdot \hatnabla) \bv_\varepsilon +\hatnabla p_\varepsilon &= \mu_0  \widehat{\div} \left( \hatnabla \bv_\varepsilon \right) &&\text{ in } Q_{\varepsilon,I},\\
		\widehat{\div}\, \bv_\varepsilon &= 0  &&\text{ in } Q_{\varepsilon,I},\\
		\bv_\varepsilon \cdot \bn_\varepsilon &= V_\varepsilon && \text{ on } \partial Q_{\varepsilon,I},\\
		{[} \bE_3( \bv_\varepsilon ) \bn_\varepsilon ]_{tan} &= 0 && \text{ on } \partial Q_{\varepsilon,I},
	\end{aligned}
\end{equation}
where $[\ba]_{tan}$ denotes the tangential component to $\partial \Omega_\varepsilon(t)$ of a vector $\ba\in\R^3$ and $\bE_3(\bv) := \frac{1}{2}(\hatnabla \bv + \hatnablaT \bv )$ the strain tensor. We use the notation $\bE_3(\cdot)$ to distinguish this \emph{three}-dimensional strain tensor from the surface strain tensor $\bE(\cdot)$ used in the previous sections. The differential operators $\widehat{\div}$, $\hatnabla$, cf. Section~\ref{sectCartesian}, are the usual ones in $\R^3$ and $\partial_t$ is the usual time derivative. Note that here (following the presentation in \cite{Miura_17}) the density is scaled to $\rho=1$, but this is not essential.

The system defines the velocity $\bv_\varepsilon$ and pressure $p_\varepsilon$ of a  fluid in $Q_{\varepsilon,I}$. To derive equations defining the velocity of the fluid on $\G(t)$ only, consistent with \eqref{Miura:Navier_Stokes_in_domain} and depending only on values of functions on $\G(t)$, formal asymptotic expansions are \emph{assumed}. More precisely, it is assumed that for the solution pair $(\bv_\varepsilon,p_\varepsilon)$  there exist vector fields $\bv$, $\bv^1$, $\bv^2$ and scalar functions $p$, $p^1$ such that
\begin{subequations}\label{Miura:extension_both}
\begin{align}
	\label{Miura:extension_u}
	\bv_\varepsilon(\bx,t) &= \bv( \pi,t) + d(\bx,t) \bv^1(\pi,t) + d(\bx,t)^2 \bv^2(\pi,t) + r(d^3),\\
	\label{Miura:extension_p}
	p_\varepsilon(\bx,t) &=  p(\pi,t) + d(\bx,t) p^1(\pi,t) + r(d^2).
\end{align}
\end{subequations}
Here, $r(d^k) = r( d(\bx,t)^k)$ denotes a higher order term, cf. \cite{Miura_17}. The analysis in \cite{Miura_17} is not rigorous in the sense  that it is not clear whether or under which assumptions such expansions exist.

A key ingredient to obtain surface Navier-Stokes equations from the Navier-Stokes system in the thin film domain $Q_{\varepsilon,I}$ is the following lemma (cf. \cite[Lemma 2.7]{Miura_17}).
\begin{lemma}\label{Miura:Lemma:Derivation_of_extension}
Let $\phi$ be a scalar and $\bu$ a vector-valued function on $\cS$. The derivatives of the composite functions $\phi(\pi(\bx,t),t)$ and $\bu(\pi(\bx,t),t)$ with respect to $\bx$ and $t$ are of the form
\begin{align*}
\hatnabla \phi(\pi,t) &=  ( \hatnablaG \phi )(\pi,t) + d(\bx,t) [\bB \hatnablaG \phi ](\pi,t) + r(d(\bx,t)^2),\\
\partial_t \phi(\pi,t) &= \frac{d}{dt} \phi(\pi(\bx,t),t) + d(\bx,t) (\hatnablaG V_\G \cdot \hatnablaG \phi ) (\pi,t) + r(d(\bx,t)^2),
\end{align*}
and
\begin{align*}
	\hatnabla \bu(\pi,t) &= (\hatnabla \bu \bP )(\pi,t) + d(\bx,t) [\hatnabla \bu \bB](\pi,t) + r(d(\bx,t)^2),\\
	\partial_t \bu(\pi,t) &=  \frac{d}{dt} \bu(\pi(\bx,t),t) + d(\bx,t) [\hatnabla \bu \hatnablaG V_\G](\pi,t)+ r(d(\bx,t)^2),
\end{align*}
for $(\bx,t)\in Q_{\varepsilon,I}$ and with $\pi = \pi(\bx,t)$.
\end{lemma}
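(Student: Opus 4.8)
The plan is to reduce everything to the chain rule for the map $\bx \mapsto \pi(\bx,t)$ together with a first-order Taylor expansion in the signed distance $d=d(\bx,t)$. First I would collect the basic identities for the distance function and the projection. Since $\bn=\hatnabla d$ is the normal-constant extension of the surface normal, one has $\bn(\bx,t)=\bn(\pi,t)$ and $\bP(\bx,t)=\bP(\pi,t)$, and the Jacobian of $\pi(\bx,t)=\bx-d\,\bn$ is $\hatnabla\pi=\bP-d\,\hatnabla\bn$. Here $\hatnabla\bn=\hatnabla^2 d$ is symmetric and satisfies $\hatnabla\bn\,\bn=0$; restricted to $\Gamma(t)$ it equals $\hatnablaG\bn=-\bB$ by the identity $\bB=-\hatnablaG\bn^e$ recorded at the end of Section~\ref{sectrelations}. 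Because the correction term already carries a factor $d$, I may evaluate its coefficient at $\pi$, obtaining the expansion $\hatnabla\pi=\bP+d\,\bB(\pi)+r(d^2)$ with $\bP$ and $\bB$ both symmetric.

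For the spatial derivatives I would then apply the chain rule directly. For the scalar function, $\hatnabla\big[\phi(\pi,t)\big]=(\hatnabla\pi)^T\,\hatnabla\phi^e(\pi)$, and inserting the expansion of $\hatnabla\pi$ together with $\bP\hatnabla\phi^e=\hatnablaG\phi$ and $\bB\hatnabla\phi^e=\bB\hatnablaG\phi$ (using $\bB=\bB\bP$) yields $\hatnablaG\phi+d\,\bB\hatnablaG\phi+r(d^2)$, which is extension-independent as claimed. For the vector field the same computation gives $\hatnabla\big[\bu(\pi,t)\big]=\hatnabla\bu^e(\pi)\,\hatnabla\pi=\hatnabla\bu^e\bP+d\,\hatnabla\bu^e\bB+r(d^2)$; the product $\hatnabla\bu^e\bP$ involves only tangential derivatives of $\bu$ and is therefore the extension-independent object written $\hatnabla\bu\,\bP$, and likewise $\hatnabla\bu\,\bB$.

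The time derivatives require the analogous expansion of $\partial_t\pi$. From $\partial_t d=-V_\G(\pi,t)$ and $\partial_t\bn=\partial_t\hatnabla d=\hatnabla\partial_t d=-\hatnabla V_\G^e$ (with $V_\G^e$ the normal-constant extension of $V_\G$) I would get $\partial_t\pi=V_\G\,\bn-d\,\partial_t\bn=V_\G\,\bn+d\,\hatnabla V_\G^e$; since $V_\G^e$ is normal-constant, $\hatnabla V_\G^e=\hatnablaG V_\G$ on $\Gamma(t)$, so $\partial_t\pi=V_\G\,\bn+d\,\hatnablaG V_\G+r(d^2)$. The chain rule then gives $\partial_t\big[\phi(\pi,t)\big]=\hatnabla\phi^e(\pi)\cdot\partial_t\pi+\partial_t\phi^e(\pi)=\big[\partial_t\phi^e+V_\G\,(\bn\cdot\hatnabla\phi^e)\big]+d\,(\hatnablaG\phi\cdot\hatnablaG V_\G)+r(d^2)$, where in the order-$d$ term I used that $\hatnablaG V_\G$ is tangential so that $\hatnabla\phi^e\cdot\hatnablaG V_\G=\hatnablaG\phi\cdot\hatnablaG V_\G$; the bracketed $d$-independent term is exactly the derivative of $\phi$ following the normal motion of the surface, i.e.\ $\frac{d}{dt}\phi(\pi,t)$. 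The vector case is identical up to keeping the matrix order, producing $\frac{d}{dt}\bu(\pi,t)+d\,\hatnabla\bu\,\hatnablaG V_\G+r(d^2)$. The main obstacle is the geometric bookkeeping in these last steps: correctly deriving $\partial_t\bn=-\hatnablaG V_\G$ on $\Gamma(t)$ and, above all, identifying the $d$-independent part of each time derivative with the intrinsic normal time derivative $\frac{d}{dt}$ on $\cS$, so that the expansions genuinely depend only on surface data and the remainders are of order $r(d^2)$.
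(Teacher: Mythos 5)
Your proof is correct: the chain-rule expansion via $\hatnabla\pi=\bP+d\,\bB(\pi)+r(d^2)$ and $\partial_t\pi=V_\G\bn+d\,\hatnablaG V_\G(\pi)+r(d^2)$, using $\bn=\hatnabla d$, $\partial_t d=-V_\G(\pi,t)$, $\hatnabla^2 d|_\Gamma=\hatnablaG\bn=-\bB$ and the identification of the $d$-independent part with the normal time derivative, is exactly the standard argument. The paper itself states this lemma without proof, citing \cite[Lemma 2.7]{Miura_17}, and your derivation coincides with the proof given there, including the correct handling of the sign convention $\bB=-\hatnablaG\bn^e$ and the extension-independence of each coefficient.
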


Substituting the  expansions \eqref{Miura:extension_both} in the Navier-Stokes equations, collecting zero and first order (in $\varepsilon$) terms and using Lemma \ref{Miura:Lemma:Derivation_of_extension}, the following result is derived in \cite[Section 4]{Miura_17}.

\begin{theorem} \label{thmmiura}
Let $\bv_\varepsilon$ and $p_\varepsilon$ satisfy the Navier-Stokes equations \eqref{Miura:Navier_Stokes_in_domain} in the moving domain $\Omega_{\varepsilon}(t)$ with given normal velocity $V_\G$. Then, the zeroth order velocity field $\bv$ and the zeroth and first order terms $p$ and $p^1$ satisfy the following equations on $\Gamma(t)$:
\begin{align}
\begin{cases}\label{Miura:full_Navier_Stokes}
	\bv \cdot \bn = V_\G, \\
	\overset{\bullet}{\bv} = -\hatnablaG p - p^1 \bn + 2 \mu_0 \hatdivG \bE(\bv), \\
	\hatdivG \bv = 0.
\end{cases}
\end{align}
Here the surface strain tensor $\bE$ is as in \eqref{Contactf}.
\end{theorem}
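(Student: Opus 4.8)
The plan is to carry out the formal thin-film asymptotics underlying \cite{Miura_17}: substitute the postulated expansions \eqref{Miura:extension_both} for $\bv_\varepsilon$ and $p_\varepsilon$ into every equation of the bulk system \eqref{Miura:Navier_Stokes_in_domain}, expand each differential operator in powers of the signed distance $d$ by means of Lemma~\ref{Miura:Lemma:Derivation_of_extension} (using $\hatnabla d = \bn^e$), and then match terms of equal order. On the lateral boundary one has $d=\pm\varepsilon$, so there powers of $d$ coincide with powers of $\varepsilon$. The three target relations in \eqref{Miura:full_Navier_Stokes} will emerge as the zeroth-order ($\varepsilon^0$) content of the boundary condition, of the incompressibility constraint, and of the momentum balance, respectively, once the higher-order coefficients $\bv^1,\bv^2,p^1$ have been partly determined from the boundary data.

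First I would treat the two boundary conditions. Evaluating $\bv_\varepsilon\cdot\bn_\varepsilon = V_\varepsilon$ at $d=+\varepsilon$ and $d=-\varepsilon$ (where $\bn_\varepsilon=\pm\bn$ and $V_\varepsilon=\pm V_\G$) and inserting \eqref{Miura:extension_u}, the $\varepsilon^0$ part yields $\bv\cdot\bn = V_\G$, the first equation, while the difference of the two evaluations isolates the $\varepsilon^1$ part and gives $\bv^1\cdot\bn = 0$. The tangential free-stress condition $[\bE_3(\bv_\varepsilon)\bn_\varepsilon]_{tan}=0$, expanded to lowest order, determines the tangential part $\bP\bv^1$ in terms of surface quantities of $\bv$ and the Weingarten map $\bB$; this is exactly the information needed later to close the viscous term. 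Next I would insert the expansion into $\widehat{\div}\,\bv_\varepsilon=0$: using $\trace(\hatnabla\bu\,\bP)=\hatdivG\bu$ and $\widehat{\div}(d\,\bv^1(\pi)) = \bn\cdot\bv^1 + O(d)$, the $\varepsilon^0$ part is $\hatdivG\bv + \bv^1\cdot\bn$, which together with $\bv^1\cdot\bn=0$ yields the surface incompressibility $\hatdivG\bv=0$, the third equation.

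The momentum equation is handled the same way. The pressure gradient $\hatnabla p_\varepsilon$ with $p_\varepsilon = p(\pi)+d\,p^1(\pi)+\dots$ splits, by the scalar rule in Lemma~\ref{Miura:Lemma:Derivation_of_extension} together with $\hatnabla(d\,p^1)= p^1\bn + O(d)$, into $\hatnablaG p + p^1\bn$ at leading order, producing the two pressure terms. For the left-hand side, the time-derivative and convective contributions must be reassembled into the surface material derivative: by Lemma~\ref{lemma_material_derivative_cartesian} one has $\overset{\bullet}{\bv} = \partial_t\bv^e + \hatnabla\bv^e\,\bv$, and one checks that the $\varepsilon^0$ part of $\partial_t\bv_\varepsilon + (\bv_\varepsilon\cdot\hatnabla)\bv_\varepsilon$ reproduces exactly this expression, the normal part of the convection $\hatnabla\bv^e\,(V_\G\bn)$ being supplied through the closest-point time derivative and the constraint $\bv\cdot\bn=V_\G$. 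Collecting the $\varepsilon^0$ terms of all pieces then gives the second equation $\overset{\bullet}{\bv} = -\hatnablaG p - p^1\bn + 2\mu_0\hatdivG\bE(\bv)$.

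The hard part will be the viscous term $\mu_0\widehat{\div}(\hatnabla\bv_\varepsilon)$. Expanding $\hatnabla\bv_\varepsilon$ to first order in $d$ (which already involves $\bv^1$ and a product with $\bB$) and then taking its full three-dimensional divergence introduces second-order normal derivatives, hence the coefficient $\bv^2$ together with curvature products. The task is to show that, after substituting the values of $\bv^1\cdot\bn$, $\bP\bv^1$, and $\bv^2$ forced by the boundary conditions of the first step and by the next-order balances, all non-intrinsic and normal-derivative contributions either cancel or recombine into the symmetric tangential operator $2\hatdivG\bE(\bv)$, with $\bE(\bv)$ as in \eqref{Contactf}. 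This reassembly --- matching a bulk Laplacian-type operator acting on the expansion against the intrinsic surface operator $\hatdivG\bE$ --- is the delicate bookkeeping at the core of the derivation; the remaining order-matching is routine.
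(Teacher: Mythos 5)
Your proposal takes essentially the same route as the paper's own treatment: substitute the formal expansions \eqref{Miura:extension_both} into \eqref{Miura:Navier_Stokes_in_domain}, expand all operators in powers of $d$ via Lemma~\ref{Miura:Lemma:Derivation_of_extension}, and match orders, with the boundary conditions yielding $\bv\cdot\bn=V_\G$ and the information on $\bv^1$, the pressure term handled exactly as in \eqref{Miura:nabla_q_=_nabla_p_p1}, and the time-derivative/convection terms recombining into $\overset{\bullet}{\bv}$ via Lemma~\ref{lemma_material_derivative_cartesian}. The one step you defer as ``delicate bookkeeping'' --- reassembling $\widehat{\div}\big(\hatnabla\bv_\varepsilon\big)$ into $2\mu_0\hatdivG\bE(\bv)$ using the forced values of $\bv^1$ and $\bv^2$ --- is precisely the part the paper also does not reproduce, citing instead \cite[Section 4, Remark 4.5]{Miura_17}, so your sketch matches the paper's level of detail and approach.
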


We briefly discuss the result of Theorem~\ref{thmmiura}. From the expansion \eqref{Miura:extension_both} we obtain
$
\bv = \bv_{\varepsilon|\G}$ and $p = p_{\varepsilon|\G}$ and thus the velocity field $\bv$ and pressure $p$ in \eqref{Miura:full_Navier_Stokes} coincide with
the bulk velocity and pressure ($\bv_\varepsilon$ and $p_\varepsilon$) evaluated on the surface. We indicate  why  in \eqref{Miura:full_Navier_Stokes} the first order term $p^1$ arises. From differentiation of the expansion \eqref{Miura:extension_p} we get for a fixed $t \in I$:
\begin{align}
\begin{split}\label{Miura:nabla_q_=_nabla_p_p1}
	\hatnabla p_\varepsilon(\bx,t)  &= \hatnabla [p(\pi,t)] + \hatnabla
[d(\bx,t) p^1(\pi,t)] + r(d)\\
	&= \hatnablaG p^e + \hatnabla d(\bx,t)~p^1(\pi,t) + d(\bx,t) \hatnabla[p^1(\pi,t)]  + r(d)\\
	&= \hatnablaG p^e + \bn(\bx,t) p^1(\pi,t) + r(d),
\end{split}
\end{align}
with $\pi = \pi(\bx,t)$ and $d = d(\bx,t)$.
For $\varepsilon \to 0$ we obtain the relation $\hatnabla p_\varepsilon = \hatnablaG p + p^1 \bn$. Hence, we expect  $\hatnablaG p + p^1 \bn$, and not only $\hatnablaG p$, to occur in \eqref{Miura:full_Navier_Stokes}. Analogously to \eqref{Miura:nabla_q_=_nabla_p_p1} we get, for $\varepsilon \to 0$, expressions for  $\hatnabla \bv_\varepsilon$ and $\widehat{\div}\left( \hatnabla \bv_\varepsilon \right)$ that contain
$\bv^1$ and $\bv^2$. The functions $\bv^1$ and $\bv^2$, however, do \emph{not} occur in the surface Navier-Stokes system \eqref{Miura:full_Navier_Stokes}. This is based on the relations (cf. \cite[Remark 4.5]{Miura_17})
\begin{equation*}
\bv^1=-(\nablaG \bv)\bn, \quad \bv^2=-\frac{1}{2}\left( \bB \nablaG \bv + \nablaG \bv^1 \right) \bn = 0.
\end{equation*}
Therefore, the resulting  system of equations \eqref{Miura:full_Navier_Stokes} has (only) the unknowns $\bv$, $p$ and $p^1$.
Comparing \eqref{Miura:full_Navier_Stokes}  with \eqref{J_R_O:full_Navier_Stokes} we see that instead of $-p \kappa \bn$ in \eqref{J_R_O:full_Navier_Stokes} we now have $p^1 \bn$ and an additional equation $\bv\cdot \bn = V_\Gamma$, with a given $V_\Gamma$. This additional scalar equation makes \eqref{Miura:full_Navier_Stokes} a closed system for the unknowns $\bv, p, p^1$.  Further comparison of the systems in \eqref{Miura:full_Navier_Stokes}  and \eqref{J_R_O:full_Navier_Stokes} is discussed in Section~\ref{secdiscussion}.

\subsection{Thin film approach in curvilinear coordinates}
Similar to the modeling approach outlined in the previous subsection, the authors of \cite{NitschkeReutherVoigt2019} derive \emph{tangential} surface Navier-Stokes equations based on a thin-film limit procedure. Instead of using Cartesian coordinates a \emph{three-dimensional curvilinear} thin film coordinate system is used.
In the subsections below we outline this approach and the resulting surface Navier-Stokes equations.

\subsubsection{Thin film curvilinear coordinate system}
We consider an evolving surface as in Section~\ref{secmatsurface}. The evolution of the surface geometry is determined by the normal velocity field only. Therefore a surface parametrization can also be obtained by using the normal velocity field \cite{Bothe05}. More precisely, we consider the initial value problems \eqref{def_flow_map} with the velocity field $\bv$ replaced by $v_N \bn=(\bv \cdot \bn)\bn$ and a corresponding flow map (cf. Section~\ref{secmatsurface}) denoted by $\Phi_t^n$. Instead of the parametrization  in \eqref{mapR} we use $R_n(\bxi,t):= \Phi_t^n(\Phi_U(\bxi))$.
For a fixed $t \in I$ we consider  a thin film neighborhood $\Omega_\varepsilon(t)$ as defined in the previous section. A natural parametrization of this domain is given by
\begin{equation}
\tilde R_n(\bxi,\zeta)=\tilde R_n(\bxi,\zeta,t) := R_n(\bxi,t) + \zeta \bn(\bxi,t),
\end{equation}
with $\bxi \in U$,  $\zeta \in (-\varepsilon,\varepsilon)$.
Based on this thin film parametrization we introduce, analogous to Section~\ref{seccoordanddiffop}, curvilinear coordinates and representations of differential operators in these coordinates. Note that in
Section~\ref{seccoordanddiffop} we used a \emph{two}-dimensional surface parametrization with first fundamental form denoted by $g_{\alpha \beta}$, whereas in this section we have a \emph{three}-dimensional parametrization of the tubular domain $\Omega_\varepsilon(t)$.
Similar as in the previous sections, we use Greek letters to sum over $1,2$ and Latin letters to sum over $1,2,3$. Partial derivatives are denoted by $\partial_i$, i.e., $\partial_i=\frac{\partial}{\partial \xi_i}$, $i=1,2$, $\partial_3=\frac{\partial}{\partial \zeta}$. We introduce the covariant basis $\bG_i= \partial_i \tilde{R}_n$, the corresponding contravariant basis $\bG^i$, the metric tensor $G_{ij} := \bG_i \cdot \bG_j$ and the Christoffel symbols $\bGamma_{ij}^{k} := \frac{1}{2} G^{kl} ( \partial_i G_{jl} + \partial_j G_{il} - \partial_l G_{ij} )$.
Derivatives in curvilinear coordinates $(\bxi,\zeta)$ can be defined completely analogous to Section~\ref{sectlocal}. For a scalar function $\phi$ we define the gradient $\nabla \phi:=\partial_i \phi \bG^i$, for a vector field $\bu$ we define the (covariant) derivative $\nabla \bu= \partial_i \bu \otimes \bG^i$, the divergence $\div \bu :=\partial_i \bu \cdot \bG^i$ and for an operator valued function $\bT$ the divergence $\div \bT= (\partial_i \bT)^T \bG^i$. Using the fact that these operators do not depend on the choice of the parametrization, cf. \cite{Ciarlet2013}, one obtains the following relations with differential operators in Euclidean three-dimensional space, for which we used the $\widehat{~}$ notation, cf.~Section~\ref{sectCartesian}:
\begin{align} \label{correspondence}
\nabla \phi = \hatnabla \phi, ~~ \nabla  \bu = \hatnabla \bu, ~~ \div  \bu = \widehat{\div} \bu, ~~ \div \bT = \widehat{\div} \left( \bT^T \right),
\end{align}
with  $\widehat{\div}\bT:=\widehat{\div}
\big( \bT^T\hat{\be}_i \big) \hat{\be}_i$ the usual row-wise divergence of a tensor.
Below we use the notation without $\widehat{~}$.
Analogous to Theorem \ref{thm_partial_derivatives}, cf. also equations \eqref{eq_partial_covariant_derivative_component}-\eqref{eq_divergence_vector_component}, one can represent these operators in terms of local components, e.g.:
\begin{align*}
 (\nabla \bu)_{ij} = u_{i|j} \left( \bG^i \otimes \bG^j \right), \quad \div  \bu =  u^i_{~|i}, \quad \div  \bT =  T^{~j}_{i~~|j} \bG^i,
\end{align*}
with
\begin{align*}
u_{i|j} := \partial_j u_i - \bGamma_{ij}^k u_k, \quad u^j_{~|i} := \partial_i u^j + \bGamma_{ki}^j u^k, \quad  T^{~j}_{i~~|k} := G^{jl} T_{il|k}, \\
T^{ij}_{~|k} := \partial_k T^{ij} + T^{lj} \bGamma_{lk}^i + T^{il} \bGamma_{lk}^j, \quad T_{ij|k} := \partial_k T_{ij} - T_{lj} \bGamma_{ik}^l - T_{il} \bGamma_{jk}^l.
\end{align*}

For deriving a limit equation in Section~\ref{sect_limit} below it is convenient to relate the three-dimensional metric tensor $G_{ij}$ to a suitable surface metric on $\Gamma(t)$. For the latter we use the one induced by the parametrization $R_n$.
With a slight abuse of notation, we use the same symbols as in Section \ref{seccoordinate}, e.g. $g_{\alpha \beta}$ for the metric tensor induced by $R_n$. One can  derive  the following useful results for these metric tensors \cite{NitschkeReutherVoigt2019}:
\begin{equation} \label{Gg}
\begin{split}
G_{\alpha \beta} &= g_{\alpha \beta} - 2 \zeta b_{\alpha \beta} + \zeta^2 b_{\alpha \gamma} b^\gamma_\beta,  \quad G_{\zeta\zeta} = 1 , \quad G_{\zeta \alpha} = G_{\alpha \zeta} = 0 , \\
G^{\alpha \beta} &= g^{\alpha \beta} + \cO(\zeta), \quad G^{\zeta \zeta} = 1, \quad G^{\zeta \alpha} = G^{\alpha \zeta} = 0, \quad \bGamma_{\alpha \beta}^\gamma = \Gamma_{\alpha \beta}^\gamma + \cO(\zeta), \\
\bGamma_{\alpha \beta}^\zeta &= b_{\alpha \beta} + \cO(\zeta), \quad \bGamma_{\alpha \zeta}^\beta = \bGamma_{\zeta \alpha}^\beta = - b_\alpha^\beta + \cO(\zeta), \quad \bGamma_{i \zeta}^\zeta = \bGamma_{\zeta i}^\zeta = \bGamma_{\zeta \zeta}^j = 0.
\end{split}
\end{equation}

The material derivative is defined as in Section~\ref{secdiffop_subsectmaterial} but now with respect to the parametrization $\tilde R_n(\bxi,\zeta,t)$:
\begin{align} \label{timeder}
\overset{\bullet}{f}(\by,t) :=\partial_t \bar{f}(\bxi,\zeta,t)= \partial_t {f}(\tilde R_n(\bxi,\zeta,t),t), \quad \by =\tilde R_n(\bxi,\zeta,t)\in \Omega_\varepsilon(t).
\end{align}
For the  velocity field corresponding to the parametrization $\tilde R_n$ we use the notation
\begin{equation} \label{defwr}
  \bw_R(\by,t):= \frac{\partial}{\partial t} \tilde R_n(\bxi,\zeta,t), \quad \by =\tilde R_n(\bxi,\zeta,t)\in \Omega_\varepsilon(t).
\end{equation}
Using  $\frac{\partial}{\partial t} R_n(\bxi,t) = (v_N \bn)(R_n(\bxi,t),t)$ it follows that $\bw_R= v_N\bn + \mathcal{O}(\zeta)$ holds.
The material derivative  can be reformulated in Cartesian coordinates as
\begin{equation}
\label{material_derivative_Voigt_surface}
\overset{\bullet}{f}(\by,t)  = \partial_t f(\by,t) + \nabla f (\by,t) \cdot \bw_R(\by,t), \quad \by \in \Omega_\varepsilon(t).
\end{equation}
\subsubsection{Navier-Stokes equation in thin film}
In \cite{NitschkeReutherVoigt2019} the authors derive a surface Ericksen-Leslie model, starting from a simplified local three-dimensional Ericksen-Leslie model (cf. \cite[equations (B1)-(B3)]{NitschkeReutherVoigt2019}). We simplify these equations by taking $\lambda=0$ in equation (B1). The resulting Navier-Stokes equations are similar to the ones in Section~\ref{sectMiura}. Note, however, that in that section we used Cartesian coordinates whereas in this section curvilinear thin film coordinates are used.  To simplify the notation we write $\bv$ instead of $\bv_\varepsilon$. The space-time domain is as defined in \eqref{spacetime}. The  thin film Navier-Stokes system from \cite{NitschkeReutherVoigt2019} is given by
\begin{equation} \label{Voigt_Navier_Stokes_in_domain}
	\begin{aligned}
		\partial_t  \bar \bv + \nabla^{\bu} \bv &= - \nabla  p_\varepsilon + \mu_0 \Delta \bv &&\text{ in } Q_{\varepsilon,I},\\
		{\div}\, \bv &= 0  &&\text{ in } Q_{\varepsilon,I},\\
		\bw_R \cdot \bn_\varepsilon &= \pm V_\Gamma && \text{ on } \partial Q_{\varepsilon,I}, \\
		{[} \bE_3( \bv) \bn_\varepsilon ]_{tan} &= 0 && \text{ on } \partial Q_{\varepsilon,I},
	\end{aligned}
\end{equation}
with $V_\Gamma$ the given normal velocity of $\Gamma(t)$, the Laplace operator $\Delta \bv := \div \nabla  \bv + \nabla \div  \bv$,  $[\cdot]_{tan}$  the tangential component to $\partial \Omega_\varepsilon(t)$ as in  \eqref{Miura:Navier_Stokes_in_domain} and $\bE_3(\bv) := \frac{1}{2}(\nabla \bv + \nabla^T \bv )$ the strain tensor. In \cite{NitschkeReutherVoigt2019} this strain tensor is denoted by the Lie derivative of the metric tensor,  $\cL_{\bv} \bG = \nabla \bv+ \nabla^T  \bv$.
The time derivative $\partial_t \bar \bv$ is defined in curvilinear coordinates as in \eqref{timeder}. The direction $\bu$ used in the directional derivative $\nabla^{ \bu}\bv = \nabla \bv\, \bu$ is the relative fluid velocity defined by
$  \bu := \bv -\bw_R$, with $\bw_R$ as in \eqref{defwr}. Using \eqref{material_derivative_Voigt_surface} and \eqref{correspondence} we obtain
\[
  \partial_t \bar \bv + \nabla^{\bu} \bv= \partial_t \bv + \nabla \bv \, \bw_R+ \nabla \bv (\bv - \bw_R)= \partial_t \bv + \nabla \bv \, \bv,
\]
an thus this is the usual material derivative in Cartesian coordinates, in particular the same as in \eqref{Miura:Navier_Stokes_in_domain}. In \cite{NitschkeReutherVoigt2019} the authors assume the thin film to evolve with constant thickness such that the surface is located in the middle of the outer domain all the time. Hence, $\bw_R|_{\partial \Omega_\varepsilon} = v_n \bn_\varepsilon$ and the resulting relative velocity $\bu$ is tangential to $\partial \Omega_\varepsilon$. Therefore, the third equations of \eqref{Voigt_Navier_Stokes_in_domain} and \eqref{Miura:Navier_Stokes_in_domain} coincide. We conclude that the two volume Navier-Stokes  systems  \eqref{Voigt_Navier_Stokes_in_domain} and \eqref{Miura:Navier_Stokes_in_domain} are equal.

 \subsubsection{Tangential surface Navier-Stokes system}
 \label{sect_limit}
Using the curvilinear coordinate system, a tangential limit system ($\varepsilon \downarrow 0$) of \eqref{Voigt_Navier_Stokes_in_domain} is derived in \cite{NitschkeReutherVoigt2019}. We sketch the key ingredients of the derivation.

The covariant components of the strain tensor are given by $\frac{1}{2} \left( v_{j|i} + v_{i|j} \right)$. The homogeneous Navier boundary condition can be rewritten as
\begin{equation}
\label{Voigt:Navier_Stokes_in_domain_BC}
v_{\alpha|\zeta} + v_{\zeta|\alpha}=0 \text{ on } \partial \Omega_\varepsilon(t).
\end{equation}
Using this, Taylor expansions and the results in \eqref{Gg}, the following relations can be derived (cf. \cite[equations (B9)-(B11), (B13)]{NitschkeReutherVoigt2019}):
\begin{align}
\label{eq_TaylorEstimates}
\begin{split}
v_{\zeta|\zeta} |_\Gamma &= \cO(\varepsilon^2), \quad (\bE_3(\bv))_{\alpha\zeta}|_\Gamma = \cO(\varepsilon^2),\\
\partial_\zeta (\bE_3(\bv))_{\alpha\zeta}|_\Gamma &= \cO(\varepsilon^2), \quad (\bE_3(\bv))_{\alpha\zeta|\zeta}|_\Gamma = \cO(\varepsilon^2).
\end{split}
\end{align}
On $\Gamma$ we denote the tangential component of the velocity by $\bv_T$, i.e.
\begin{equation*}
\bv_T = (v_T)_\alpha \bg^\alpha = (v_\alpha \bg^\alpha)|_{\Gamma}= \bP \bv|_{\Gamma} \in T^1\Gamma.
\end{equation*}
The following identity holds (cf. \cite[equation (B18)]{NitschkeReutherVoigt2019}):
\begin{equation}
\label{eq_connection_V_v}
v_{\alpha|\beta}|_\Gamma = (v_T)_{\alpha|\beta} - v_N b_{\alpha \beta}.
\end{equation}
We aim to derive equations for $\bv_T$ and $p=p_\varepsilon|_{\Gamma}$ on the surface. We first consider the second equation of \eqref{Voigt_Navier_Stokes_in_domain}. Using \eqref{eq_TaylorEstimates} and \eqref{eq_connection_V_v}, the following relation can be derived (cf. \cite[equation (B22)]{NitschkeReutherVoigt2019}):
\begin{equation}
\label{Voigt:Navier_Stokes_surface_div}
0 = (\div \bv)|_\Gamma = \divG \bv_T - v_N \kappa + \cO(\varepsilon^2).
\end{equation}
We now treat the projection of the material derivative in the first equation of \eqref{Voigt_Navier_Stokes_in_domain}. Using $\bu|_\Gamma=\bP \bv |_\Gamma$, \eqref{eq_connection_V_v} and $(v_T)^\beta  (v_T)_{\alpha|\beta} \bg^\alpha = (\nablaG  \bv_T) \bv_T =: \nablaG^{\bv_T}\bv_T$ we obtain for the tangential part of the directional derivative $\nabla^{\bu} \bv$:
\begin{align}\label{eqn_nabla_UV}
[\nabla^{\bu} \bv]_\alpha |_\Gamma & = u^i v_{\alpha|i} |_\Gamma = v^\beta v_{\alpha|\beta}|_\Gamma \nonumber \\
 & = (v_T)^\beta \left( (v_T)_{\alpha|\beta} - v_N b_{\alpha \beta} \right) = [\nablaG^{\bv_T}\bv_T - v_N \bB \bv_T]_\alpha.
\end{align}
Using $\partial_t \tilde{R}_n |_\Gamma = v_N \bn$ and the splitting $\bv = v^\alpha \partial_\alpha \tilde{R}_n + v_\zeta \bn_\varepsilon$, the following relation for the tangential component of the time derivative can be derived  (cf. \cite[equation (B24)]{NitschkeReutherVoigt2019}):
\begin{equation}
\label{eqn_dtV}
[\partial_t \bar \bv]_\alpha |_\Gamma = g_{\alpha\beta} \partial_t (\bar{v}_T)^\beta - v_N (b_{\alpha\beta} (v_T)^\beta + \partial_\alpha v_N ).
\end{equation}
From \eqref{eqn_nabla_UV} and \eqref{eqn_dtV}, we obtain (cf. \cite[equation (B26)]{NitschkeReutherVoigt2019}):
\begin{align}
\label{Voigt:Navier_Stokes_surface_material_derivative}
\bP \left.\left( \partial_t \bar \bv + \nabla^{ \bu}  \bv \right) \right|_\Gamma = (\partial_t (\bar{v}_T)^\alpha) \bg_\alpha + \nabla_\Gamma^{\bv_T} \bv_T - v_N (2\bB \bv_T + \nablaG v_N).
\end{align}
For the pressure term $p=p_\varepsilon|_{\Gamma}$ in \eqref{Voigt_Navier_Stokes_in_domain}, we get:
\begin{equation} \label{formp}
\bP (\nabla p_\varepsilon)|_\Gamma = \nablaG  p.
\end{equation}
Finally we consider the projection of the Laplacian in the first equation in \eqref{Voigt_Navier_Stokes_in_domain}.
For a solenoidal vector field we have
\begin{equation*}
\bP \left( \Delta \bv \right) |_\Gamma = \left( \left( \Delta \bv \right)_\alpha \bg^\alpha \right) |_\Gamma = 2 \left( \left(  \div \bE_3(\bv) \right)_\alpha \bg^\alpha \right) |_\Gamma.
\end{equation*}
Using \eqref{eq_TaylorEstimates}, the following relation can be derived (cf. \cite[equation (B17)]{NitschkeReutherVoigt2019}):
\begin{align*}
\left( \div \bE_3(\bv)   \right)_\alpha |_\Gamma & = g^{\beta \gamma} \left( \left( \left( \bE_3(\bv) \right)_{\alpha \gamma} |_\Gamma \right)_{|_\beta} - b_{\alpha \beta} \left( \bE_3(\bv) \right)_{\zeta \gamma} |_\Gamma \right) + \cO(\varepsilon^2) \\
  &= g^{\beta \gamma}  \left( \left( \bE_3(\bv) \right)_{\alpha \gamma} |_\Gamma \right)_{|_\beta}  + \cO(\varepsilon^2).
\end{align*}
Using $\left( \bE(\bv_T) \right)_{\alpha \gamma} = \frac{1}{2} \left( (v_T)_{\alpha|\gamma} + (v_T)_{\gamma|\alpha} \right)$, we get
\begin{align*}
\left( \bE_3(\bv) \right)_{\alpha \gamma}|_\Gamma = \frac{1}{2} \left( v_{\alpha|\gamma} + v_{\gamma|\alpha} \right) |_\Gamma &= \frac{1}{2} \left( (v_T)_{\alpha|\gamma} + (v_T)_{\gamma|\alpha} \right) - v_N b_{\alpha \gamma} \nonumber \\
&= \left( \bE(\bv_T) - v_N \bB \right)_{\alpha \gamma}.
\end{align*}
Combining these results we obtain
\begin{align}
\bP \left( \Delta \bv \right) |_\Gamma &= 2 \left( \left(  \div \bE_3(\bv) \right)_\alpha \bg^\alpha \right) |_\Gamma \nonumber  \\
&= 2 g^{\beta \gamma} \left( \left( \bE_3(\bv) \right)_{\alpha \gamma} |_\Gamma \right)_{|_\beta} \bg^\alpha + \cO(\varepsilon^2) \nonumber \\
&= 2 g^{\beta \gamma} \left( \bE(\bv_T) - v_N \bB \right)_{\alpha \gamma|\beta} \bg^\alpha + \cO(\varepsilon^2) \nonumber \\
&= 2 \left( \bE(\bv_T) - v_N \bB \right)_{~|\beta}^{\beta \mu} g_{\alpha \mu} \bg^\alpha + \cO(\varepsilon^2) \nonumber \\
&= 2 \left( \bE (\bv_T) - v_N \bB \right)_{~|\beta}^{\beta \mu} \bg_\mu + \cO(\varepsilon^2) \nonumber \\
&= 2 \bP \divG( \bE (\bv_T) - v_N \bB ) + \cO(\varepsilon^2). \label{eq_Laplace_div_lie}
\end{align}
Here, we used Lemma \ref{lemma_divergence_local_representation} in the last equation.
\begin{remark} \rm
\label{rem_Voigt_div}
Note that \eqref{eq_Laplace_div_lie} seems to differ from the first equation of (B21) from  \cite{NitschkeReutherVoigt2019}. However,  different definitions of the surface divergence operators for operator-valued functions are involved. Let $\tildedivG$ be the surface divergence operator used in \cite{NitschkeReutherVoigt2019}. For an operator-valued function $\bT$ the  relation
\begin{equation*}
\tildedivG \bT = \bP \divG \bT.
\end{equation*}
holds. Using this and $
2 \bE(\bv_T) = \nablaG \bv_T+ \nablaG^T  \bv_T
$ it follows that the first identity in \cite[equation (B21)]{NitschkeReutherVoigt2019} and equation \eqref{eq_Laplace_div_lie} coincide.
\end{remark}

Combining the results \eqref{Voigt:Navier_Stokes_surface_material_derivative}, \eqref{formp}, \eqref{eq_Laplace_div_lie}, \eqref{Voigt:Navier_Stokes_surface_div} and considering the thin film limit $\varepsilon \to 0$, we obtain the tangential Navier-Stokes equations on the surface in local coordinates (cf. \cite[equation (B27)-(B28)]{NitschkeReutherVoigt2019}):
\begin{equation} \label{Voigt:Navier_Stokes_surface}
\begin{aligned}
 (\partial_t (\bar{v}_T)^\alpha) \bg_\alpha + \nabla_\Gamma^{\bv_T} \bv_T - v_N (2\bB \bv_T + \nablaG v_N) &= - \nablaG p + 2 \mu_0 \bP \divG( \bE( \bv_T ) - v_N \bB ),\\
\divG \bv_T &= v_N \kappa.
\end{aligned}
\end{equation}
Using $\partial_t \bar \bv_T = (\partial_t (\bar{v}_T)^\alpha) \partial_\alpha R_n + (v_T)^\alpha \partial_\alpha \partial_t R_n$, we obtain for the tangential part of the time derivative
\begin{equation*}
(\partial_t \bar \bv_T)_\alpha = \partial_t \bar \bv_T \cdot \partial_\alpha R_n = g_{\alpha \beta} \partial_t (\bar{v}_T)^\beta - v_N b_{\alpha \beta} (v_T)^\beta.
\end{equation*}
Hence, the tangential surface Navier-Stokes equations \eqref{Voigt:Navier_Stokes_surface} can be rewritten as  (cf. \cite[equation (B30)-(B31)]{NitschkeReutherVoigt2019}):
\begin{align}
\label{Voigt:Navier_Stokes_surface_2}
\begin{cases}
\bP \partial_t \bar \bv_T + \nabla_\Gamma^{\bv_T} \bv_T - v_N (\bB \bv_T + \nablaG v_N) = - \nablaG p + 2 \mu_0 \bP \divG( \bE( \bv_T ) - v_N \bB ),\\
\divG \bv_T = v_N \kappa.
\end{cases}
\end{align}

\section{Discussion of  surface Navier-Stokes equations}\label{secdiscussion}
In this section we compare the different equations and  discuss a directional splitting in tangential and normal components. For the surface differential operators we use the ones without $\widehat{~~}$, but this is irrelevant, cf. Theorem~\ref{theorem_comparison_local_cartesian}.
As already mentioned above, the approaches (1), (2) and (3), cf. Section~\ref{secderivation}, result in the same system of surface Navier-Stokes equations, except that in (3) no source term $\mathbf f$ is considered. We recall the resulting equations, cf.   \eqref{Hu_Zhang_eqn_NS_nabla},  \eqref{J_R_O:full_Navier_Stokes} and \eqref{K_L_G:full_Navier_Stokes}, where for convenience we put $\rho=1$:
\begin{align}
    \begin{cases}\label{ResNS1}
     \overset{\bullet}{\bv} = \bff -\nablaG p - p \kappa \bn + 2 \mu_0 \divG \bE(\bv),\\
        \divG \bv = 0.
    \end{cases}
\end{align}
 The system resulting from ansatz \ref{ansatz_miura} is  different, cf. \eqref{Miura:full_Navier_Stokes}:
 \begin{align}
\begin{cases}\label{ResNS2}
	\bv \cdot \bn = V_\G, \\
	\overset{\bullet}{\bv} = -\nablaG p - p^1 \bn + 2 \mu_0 \divG \bE(\bv), \\
	\divG \bv = 0.
\end{cases}
\end{align}
 In this system an additional unknown scalar function $p^1$ appears. In the approach  \ref{ansatz_voigt} (only) a \emph{tangential} surface Navier-Stokes system is derived, given in \eqref{Voigt:Navier_Stokes_surface_2}, which we repeat here:
 \begin{equation} \label{ResNS3}
\begin{cases}
\bP \partial_t \bar \bv_T + \nabla_\Gamma \bv_T \bv_T  = - \nablaG  p + 2\mu_0 \bP \divG( \bE(\bv_T) -  v_N \bB )+ v_N (\bB  \bv_T + \nablaG v_N),\\
\divG  \bv_T = v_N \kappa.
\end{cases}
\end{equation}
In the following, for \eqref{ResNS1} and \eqref{ResNS2} we consider a splitting of the equations for $\bv = \bv_T + v_N\bn$ and $p$ in \emph{coupled} equations for  $ \bv_T$ , $p$ (``tangential surface Navier-Stokes'') and for $ v_N$ (normal velocity). In \cite{Jankuhn_Reusken_Olshanksii}  the relations (cf. \cite[Lemma 2.1, equation (3.9)]{Jankuhn_Reusken_Olshanksii}
\begin{equation}
\label{relA}
\begin{gathered}
	\bP \overset{\bullet}{\bv} = \overset{\bullet}{\bv_T} + ( \overset{\bullet}{\bn} \cdot \bv_T) \bn + v_N \overset{\bullet}{\bn}, \quad \overset{\bullet}{\bv}\cdot\bn = \overset{\bullet}{v_N}-\bv_T \cdot \overset{\bullet}{\bn}, \\
	\bn \cdot \divG \bE(\bv) = \tr(\bB \nablaG \bv_T) - v_N \tr(\bB^2),
\end{gathered}
\end{equation}
are derived. Using these we obtain the following splitting  of the surface Navier-Stokes equations \eqref{ResNS1} into (coupled) equations
\begin{equation}
\label{J_R_O:splitting_Navier_Stokes1}
\begin{aligned}
  \overset{\bullet}{\bv_T} &=   \bff_T-\nablaG p + 2\mu_0 \bP \divG \bE(\bv) -  \big((\overset{\bullet}{\bn}\cdot\bv_T\big) \bn + v_N \overset{\bullet}{\bn}),\\
\divG \bv_T &= v_N \kappa,
\end{aligned}
\end{equation}
for the surface pressure $p$ and tangential velocity $\bv_T$ and
\begin{equation}
\label{J_R_O:splitting_Navier_Stokes2}
\begin{aligned}
  \overset{\bullet}{v_N} &= f_N+ 2\mu_0 \bn \cdot \divG \bE(\bv) - p \kappa +  \overset{\bullet}{\bn}\cdot\bv_T\\
&= f_N+ 2 \mu_0 \big(\tr(\bB \nablaG \bv_T) - v_N\tr(\bB^2)\big) - p \kappa +  \overset{\bullet}{\bn} \cdot \bv_T,
\end{aligned}
\end{equation}
for the normal velocity $v_N$. We used  the splitting $\bff = \bff_T + f_N \bn$. Note that $\overset{\bullet}{\bv_T}$ denotes the material derivative (along $\bv$) of $\bv_T$ and not $(\overset{\bullet} \bv)_T$; similarly for $\overset{\bullet}{v_N}$. We call the system \eqref{J_R_O:splitting_Navier_Stokes1} \emph{tangential surface Navier-Stokes equations}. Note that in these equations the normal velocity $v_N$ occurs.
\begin{remark} \rm
The variational principle used in  \cite{Gigaetal} to derive system \eqref{ResNS1} (with $\bff=0$) also directly leads to a tangential surface Navier-Stokes system if the class of ''admissible'' velocities $\bw$ in the defining relations for  the force terms $F_{\textrm{cons}}$ and $F_{\textrm{diss}}$ is restricted to tangential ones, i.e. $\bP\bw=\bw$. This yields  tangential force terms $F_{\textrm{cons}}=-\rho \bP\overset{\bullet}{\bv}$ and $F_{\textrm{diss}}= 2\mu_0 \bP \divG \bE(\bv)$ and a tangential momentum equation that is the same as the first equation in \eqref{J_R_O:splitting_Navier_Stokes1} with $\bff_T = 0$.
\end{remark}

 From the relation
\begin{equation}
 \label{eq_bullet_n}
 \overset{\bullet}{\bn} = - \bB \bv_T - \nablaG v_N
\end{equation}
 (cf. \cite[Lemma 2.2]{Jankuhn_Reusken_Olshanksii}) it follows  that
no $\frac{\partial}{\partial t}$ is involved in $\overset{\bullet}{\bn}$, which indicates
that the equation \eqref{J_R_O:splitting_Navier_Stokes2} determines the time dynamics of the normal velocity $v_N (\cdot,t)$, and thus of the surface $\G(t)$, whereas the tangential surface Navier-Stokes equations \eqref{J_R_O:splitting_Navier_Stokes1} determine the time dynamics of the tangential velocity $\bv_T(\cdot,t)$.

We now consider the  splitting of the Navier-Stokes system \eqref{ResNS2}. Applying the projection $\bP$ to the second equation in \eqref{ResNS2} the term $p^1 \bP \bn$ vanishes and the remaining terms are the same as in the projected version of the first equation in \eqref{ResNS1}. This implies that \eqref{ResNS2} results in \emph{the same tangential surface Navier-Stokes equations} as in \eqref{J_R_O:splitting_Navier_Stokes1} (with $\bff_T=0$). Taking the scalar product of the  second equation in \eqref{ResNS2} with $\bn$ and using the results \eqref{relA} one obtains
%
%
\begin{equation} \label{eqvN} \overset{\bullet}{v_N} = 2 \mu_0 ( \tr(\bB \nablaG \bv_T) - v_N \tr(\bB^2)) - p^1+ \overset{\bullet}{\bn} \cdot \bv_T ,
\end{equation}
i.e., similar to the normal velocity equation  \eqref{J_R_O:splitting_Navier_Stokes2}, but with $p \kappa$ replaced by the first-order unknown pressure function $p^1$ (and with $f_N=0$).
From the first equation in \eqref{ResNS2}, with given $V_\Gamma$, one obtains the  normal velocity $v_N$, which can  be substituted in the tangential surface Navier-Stokes equations, which then determine $\bv_T$ and $p$. Given $v_N$ and $\bv_T$  the unknown $p^1$ is determined by \eqref{eqvN}.

Finally  we compare  the tangential Navier-Stokes equations \eqref{J_R_O:splitting_Navier_Stokes1}  with the tangential equations \eqref{ResNS3} that result  from ansatz \ref{ansatz_voigt}. Both systems contain the same equation $\divG \bv_T = v_N \kappa$, which results from the inextensibility condition. We now show that the two tangential momentum equations in  \eqref{J_R_O:splitting_Navier_Stokes1} and \eqref{ResNS3}   are also the same if $\bff_T=0$. This can be done as follows. First note that the material derivative $\overset{\bullet}{\bv_T}$ in \eqref{J_R_O:splitting_Navier_Stokes1} is in general \emph{not} tangential. Its normal component is balanced by the term $(\overset{\bullet}{\bn} \cdot \bv_T)\bn$ on the
right-hand side in \eqref{J_R_O:splitting_Navier_Stokes1}. This normal component can be eliminated by using the relations
$\bn \cdot \overset{\bullet}{\bv_T}= -\overset{\bullet}{\bn} \cdot \bv_T$, which follows from $\bn \cdot \bv_T=0$, and
\[
  \overset{\bullet}{\bv_T}= \bP \overset{\bullet}{\bv_T}+ (\bn \cdot \overset{\bullet}{\bv_T})\bn =\bP \overset{\bullet}{\bv_T}-(\overset{\bullet}{\bn} \cdot \bv_T)\bn.
\]
Using this, \eqref{eq_bullet_n} and $\bE(\bv) =  \bE(\bv_T) - v_N \bB$, the tangential momentum equation in  \eqref{J_R_O:splitting_Navier_Stokes1}, with $\bff_T=0$, can be rewritten as
\begin{equation} \label{Hk9}
\begin{split}
  \bP\overset{\bullet}{\bv_T} & = -\nablaG p + 2\mu_0 \bP \divG \bE(\bv) -   v_N \overset{\bullet}{\bn}\\
   & = -\nablaG p + 2\mu_0 \bP \divG \left(\bE(\bv_T) - v_N \bB \right)+ v_N \left( \bB \bv_T + \nablaG v_N \right).
  \end{split}
\end{equation}
The right-hand side of this equation is the same as the right-hand side in \eqref{ResNS3}. We now compare the material derivatives on the left-hand sides. Applying Lemma \ref{lemma_material_derivative_cartesian}, the left-hand side of \eqref{Hk9} yields
\begin{equation}\label{C1}
\bP \overset{\bullet}{\bv_T} = \bP (\partial_t \bv_T^e +  \nabla \bv_T^e \bv).
\end{equation}
For the left-hand side in  \eqref{ResNS3} we obtain, using \eqref{timeder} and \eqref{material_derivative_Voigt_surface},
\[
\begin{split}
\bP \partial_t \bar \bv_T + \nabla_\Gamma \bv_T \bv_T &= \bP (\partial_t \bar \bv_T + \nabla \bv_T^e \bv_T)\\
 &= \bP (\partial_t \bv_T^e + v_N  \nabla \bv_T^e \bn +  \nabla \bv_T^e \bv_T) \\
 &= \bP (\partial_t \bv_T^e +  \nabla \bv_T^e \bv),
\end{split}
\]
and comparing this with \eqref{C1} we observe that the material derivatives also coincide. Hence, we conclude that the two tangential momentum equations in  \eqref{J_R_O:splitting_Navier_Stokes1} and \eqref{ResNS3}   are  the same (for $\bff_T=0$).

In summary, we have shown that all five derivations (1)--(5) lead to \emph{the same tangential surface Navier-Stokes equations} \eqref{J_R_O:splitting_Navier_Stokes1}. The derivations (1)--(3) result in the same equation for the normal velocity, namely the one in \eqref{J_R_O:splitting_Navier_Stokes2}.

{\bf Acknowledgments.} The authors wish to thank the German Research Foundation (DFG) for financial support within the Research Unit ''Vector- and tensor valued surface PDEs'' (FOR 3013) with project no. RE 1461/11-1. The fruitful discussions with Elena Bachini and Veit Krause (both from TU Dresden) are acknowledged.

\bibliographystyle{siam}
\bibliography{literatur}{}

\section{Appendix}

We give a proof of the second equality in \eqref{extra}. The first equality can be derived in the same way.
\begin{proof}[of \eqref{extra}]
The product rule, \eqref{eqn_partial_deriv_u} and the symmetry of the Christoffel symbols yield
\begin{align*}
	\partial_\gamma\bT &= \partial_\gamma T_{\alpha\beta} (\bg^\alpha \otimes \bg^\beta) + T_{\alpha\beta} \left( (\partial_\gamma \bg^\alpha \otimes \bg^\beta) + (\bg^\alpha \otimes \partial_\gamma \bg^\beta) \right) \\
	&= \partial_\gamma T_{\alpha\beta} (\bg^\alpha \otimes \bg^\beta) + T_{\alpha\beta} \left( (-\Gamma_{\gamma\mu}^\alpha \bg^\mu + b_\gamma^\alpha \bg^3) \otimes \bg^\beta + \bg^\alpha \otimes (-\Gamma_{\gamma\mu}^\beta \bg^\mu + b_\gamma^\beta \bg^3) \right)\\
	&= \left( \partial_\gamma T_{\alpha\beta} - \Gamma_{\gamma\alpha}^\mu T_{\mu\beta} -  \Gamma_{\gamma\beta}^\mu T_{\alpha\mu} \right) (\bg^\alpha
\otimes \bg^\beta) + T_{\alpha\beta} b_\gamma^\alpha (\bg^3 \otimes \bg^\beta) + T_{\alpha\beta} b_\gamma^\beta (\bg^\alpha \otimes \bg^3)\\
	&= T_{\alpha\beta|\gamma} (\bg^\alpha \otimes \bg^\beta) + T_{\alpha\beta} b_\gamma^\alpha (\bg^3 \otimes \bg^\beta) + T_{\alpha\beta} b_\gamma^\beta (\bg^\alpha \otimes \bg^3).
	\end{align*}
\end{proof}

We give a proof of Lemma \ref{lemma_divergence_local_representation}.
\begin{proof}[of Lemma \ref{lemma_divergence_local_representation}]
We represent $\bT$ in local coordinates as $\bT = T^{\alpha\beta} (\bg_\alpha \otimes \bg_\beta)$. From the definition of the divergence and Theorem \ref{thm_partial_derivatives}, we get
\begin{align*}
	\divG \bT &= (\partial_\alpha \bT)^T \bg^\alpha = \left( T^{\gamma\beta}_{~|\alpha} (\bg_\gamma \otimes \bg_\beta) + T^{\gamma\beta} b_{\alpha\gamma} (\bg_3 \otimes \bg_\beta) + T^{\gamma\beta} b_{\alpha\beta} (\bg_\gamma \otimes \bg_3)\right)^T \bg^\alpha\\
	&= \left(T^{\gamma\beta}_{~|\alpha} (\bg_\beta \otimes \bg_\gamma) + T^{\gamma\beta} b_{\alpha\gamma} (\bg_\beta \otimes \bg_3) + T^{\gamma\beta} b_{\alpha\beta} (\bg_3 \otimes \bg_\gamma)\right) \bg^\alpha\\
	&= T^{\gamma\beta}_{~|\alpha} \bg_\beta (\underbrace{\bg_\gamma \cdot \bg^\alpha}_{\delta_\gamma^\alpha}) + T^{\gamma\beta} b_{\alpha\gamma} \bg_\beta (\underbrace{\bg_3 \cdot \bg^\alpha}_{=0}) + T^{\gamma\beta} b_{\alpha\beta} \bg_3 (\underbrace{\bg_\gamma \cdot \bg^\alpha}_{=\delta_\gamma^\alpha}) \\
	&= T^{\alpha\beta}_{~|\alpha} \bg_\beta + T^{\alpha\beta} b_{\alpha\beta} \bg_3.
\end{align*}
\end{proof}

\begin{lemma}\label{lemma_weingarten_cartesian}
The shape operator can be represented in Cartesian coordinates by $\bB = - \hatnablaG \bn^e$.
\end{lemma}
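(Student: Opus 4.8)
The plan is to reduce the Cartesian identity to a short curvilinear computation by means of Theorem~\ref{theorem_comparison_local_cartesian}. Since $\Gamma$ is at least $C^2$, the outward normal $\bn$ is a $C^1$ vector field on $\Gamma$, so the second relation in \eqref{mresult}, namely $\nablaG \bu = \hatnablaG \bu$, applies with $\bu = \bn$. Hence $\hatnablaG \bn^e = \nablaG \bn$ (the extension being irrelevant, as noted after Definition~\ref{Def2}), and it suffices to establish the purely local identity $\nablaG \bn = -\bB$.

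First I would compute the $\alpha$-th partial covariant derivative of $\bn = \bg_3$. Differentiating $\bg_3 \cdot \bg_3 = 1$ shows $\partial_\alpha \bg_3 \cdot \bg_3 = 0$, so $\partial_\alpha \bg_3$ is tangential and $\nabla_\alpha \bn = \bP \partial_\alpha \bg_3 = \partial_\alpha \bg_3$. Expanding this tangential vector in the contravariant basis and invoking $b_{\alpha\beta} = -\partial_\alpha \bg_3 \cdot \bg_\beta$ from \eqref{eqn_def_weingarten_covariant} gives $\partial_\alpha \bg_3 = (\partial_\alpha \bg_3 \cdot \bg_\beta)\bg^\beta = -b_{\alpha\beta}\bg^\beta$.

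Next, using the definition $\nablaG \bn = \nabla_\alpha \bn \otimes \bg^\alpha$ from Definition~\ref{Def1}, I obtain $\nablaG \bn = -b_{\alpha\beta}(\bg^\beta \otimes \bg^\alpha)$. The final step is to identify this with $-\bB = -b_{\alpha\beta}(\bg^\alpha \otimes \bg^\beta)$: relabeling the summation indices $\alpha \leftrightarrow \beta$ and using the symmetry $b_{\alpha\beta} = b_{\beta\alpha}$ from \eqref{eqn_def_weingarten_covariant} turns $-b_{\alpha\beta}(\bg^\beta \otimes \bg^\alpha)$ into $-b_{\alpha\beta}(\bg^\alpha \otimes \bg^\beta) = -\bB$. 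Combined with $\hatnablaG \bn^e = \nablaG \bn$ this yields $\bB = -\hatnablaG \bn^e$.

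The computation is elementary and I do not anticipate a real obstacle. The only point requiring care is the very last manipulation: one must keep the tensor-product order $\bg^\beta \otimes \bg^\alpha$ straight and apply the symmetry of $b_{\alpha\beta}$ consistently, since overlooking it would spuriously produce $-\bB^T$ rather than $-\bB$. Here, of course, $\bB = \bB^T$, which is precisely what makes the identification go through cleanly.
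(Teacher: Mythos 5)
Your proposal is correct and follows essentially the same route as the paper: both reduce $\hatnablaG \bn^e$ to $\nablaG \bn$ via Theorem~\ref{theorem_comparison_local_cartesian}, expand $\nabla_\alpha \bn = \bP\partial_\alpha \bg_3 = -b_{\alpha\beta}\bg^\beta$ using \eqref{eqn_def_weingarten_covariant}, and conclude with the symmetry $b_{\alpha\beta}=b_{\beta\alpha}$. The only cosmetic difference is that you verify the tangentiality of $\partial_\alpha\bg_3$ directly from $\bg_3\cdot\bg_3=1$ instead of citing Theorem~\ref{thm_partial_derivatives}.
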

\begin{proof}
We use Theorem \ref{theorem_comparison_local_cartesian}, Theorem \ref{thm_partial_derivatives} and the symmetry of $b_{\alpha\beta}$ to derive
\begin{align*}
	- \hatnablaG \bn^e = - \nablaG \bn = - \nabla_\alpha \bn \otimes \bg^\alpha = - (\bP \partial_\alpha \bn) \otimes \bg^\alpha = (\bP b_{\alpha\beta} \bg^\beta) \otimes \bg^\alpha  = \bB.
\end{align*}
\end{proof}

We give a proof of Lemma \ref{lemma_rate_strain_nabla_representation}
\begin{proof}[of Lemma \ref{lemma_rate_strain_nabla_representation}]
The second equality follows from the equality of the covariant gradients, cf. Theorem~\ref{theorem_comparison_local_cartesian}. We prove the first equality. Using \eqref{eq_covariant_derivative_component} and \eqref{eqn_partial_deriv_u}, we get
\begin{align*}
	\nablaG \bv_T + \nablaGT \bv_T &= (\partial_\alpha v_\beta - \G^\tau_{\alpha\beta} v_\tau) (\bg^\beta \otimes \bg^\alpha) + (\partial_\alpha v_\beta - \G^\tau_{\alpha\beta} v_\tau) (\bg^\alpha \otimes \bg^\beta).
\end{align*}
A direct calculation, using \eqref{eqn_partial_deriv_u}, yields
\begin{align*}
	\dfrac{1}{2} (\nablaG (v_N\bn) + \nablaGT (v_N\bn)) &= \dfrac{1}{2}\left( \bg^\alpha \otimes \bP\partial_\alpha (v_N\bn) + \bP \partial_\alpha (v_N\bn) \otimes \bg^\alpha \right) \\
	&= \dfrac{1}{2}\left( -v_N b_{\alpha\beta} (\bg^\alpha \otimes \bg^\beta) - v_N b_{\alpha\beta} (\bg^\beta \otimes \bg^\alpha) \right) \\
	&= - v_N b_{\alpha\beta} (\bg^\alpha \otimes \bg^\beta).
\end{align*}
Using these results and \eqref{eqE} we obtain
\begin{align*}
	\bE(\bv) &= E_{\alpha\beta}(\bg^\alpha \otimes \bg^\beta) = \left( \dfrac{1}{2} (v_{\alpha|\beta} + v_{\beta|\alpha}) - v_N b_{\alpha\beta} \right) (\bg^\alpha \otimes \bg^\beta)\\
	&= \left( \dfrac{1}{2} ( \partial_\beta v_\alpha - \G^\tau_{\alpha\beta}
v_\tau + \partial_\alpha v_\beta - \G^\tau_{\alpha\beta} v_\tau) -  v_N b_{\alpha\beta} \right) (\bg^\alpha \otimes \bg^\beta)\\
	&= \dfrac{1}{2} (\partial_\alpha v_\beta - \G^\tau_{\alpha\beta} v_\tau) (\bg^\beta \otimes \bg^\alpha) + \dfrac{1}{2} (\partial_\alpha v_\beta
- \G^\tau_{\alpha\beta} v_\tau) (\bg^\alpha \otimes \bg^\beta) - v_N b_{\alpha\beta} (\bg^\alpha \otimes \bg^\beta)\\
	&= \dfrac{1}{2}(\nablaG \bv_T + \nablaGT \bv_T) + \dfrac{1}{2} (\nablaG (v_N\bn) + \nablaGT (v_N\bn)) = \dfrac{1}{2}(\nablaG \bv + \nablaGT \bv),
	\end{align*}
which completes the proof.
\end{proof}

\end{document}